\newtheorem{theorem}{Theorem}[section]
\newtheorem{lemma}[theorem]{Lemma}
\newtheorem{proposition}[theorem]{Proposition}
\theoremstyle{definition}
\newtheorem{definition}[theorem]{Definition}
\DeclareMathOperator*{\argmax}{arg\,max}
\DeclareMathOperator*{\argmin}{arg\,min}
\newcommand{\cmark}{\cellcolor{green!20}{\ding{51}}}
\newcommand{\xmark}{\cellcolor{red!20}{\ding{55}}}
\newcolumntype{P}[1]{>{\centering\arraybackslash}m{#1}}
\newcommand{\sw}{\textsc{SW}}
\author[1]{Pooja Kulkarni}
\author[2]{Ruta Mehta}
\author[2]{Vishnu V. Narayan}
\author[3]{Tomasz Ponitka}
\affil[1]{Northwestern University}
\affil[2]{University of Illinois Urbana-Champaign}
\affil[3]{Tel Aviv University}
\affil[ ]{\footnotesize\texttt{pooja.kulkarni@northwestern.edu, rutameht@illinois.edu, vishnu.narayan@mail.mcgill.ca, tomaszp@mail.tau.ac.il}}
\newcommand\blfootnote[1]{
  \begingroup
  \renewcommand\thefootnote{}
  \NoHyper\footnote{#1}\endNoHyper
  \addtocounter{footnote}{-1}
  \endgroup
}
\title{Online Fair Division With Subsidy: \\When Do Envy-Free Allocations Exist, and at What Cost?}
\date{October 15, 2025}
\newcommand{\BibTeX}{\rm B\kern-.05em{\sc i\kern-.025em b}\kern-.08em\TeX}
\begin{document}

\maketitle 

\blfootnote{\hspace*{-2.2em}
A portion of this work was completed while P. Kulkarni was a student at the University of Illinois Urbana-Champaign. P. Kulkarni, R. Mehta and V. V. Narayan were supported by NSF grant CCF-2334461. P. Kulkarni was supported by a gift from Adobe to S. Khuller. T. Ponitka was supported by ERC grant 101170373, an Amazon Research Award, NSF-BSF grant 2020788, Israel Science Foundation grant 2600/24, and a TAU Center for AI and Data Science grant.}

\begin{abstract}
We study the problem of fairly allocating $m$ indivisible items arriving online, among $n$ (offline) agents. Although envy-freeness has emerged as the archetypal fairness notion,  envy-free (EF) allocations need not exist with indivisible items. To bypass this, a prominent line of research demonstrates that there exist allocations that can be made envy-free by allowing a {\em subsidy}. Extensive work in the offline setting has focused on finding such {\em envy-freeable} allocations with bounded subsidy. We extend this literature to an online setting where items arrive one at a time and must be immediately and irrevocably allocated.  Our contributions are two-fold:
\begin{itemize}[leftmargin=*]
    \item \textbf{Maintaining EF Online:} We show that envy-freeability cannot always be preserved online when the valuations are submodular or supermodular, even with binary marginals. In contrast, we design online algorithms that maintain envy-freeability at every step for the class of additive valuations, and for its superclasses including $k$-demand and SPLC valuations.
    \item \textbf{Ensuring Low Subsidy:} We investigate the quantity of subsidy required to guarantee envy-freeness online. Surprisingly, even for additive valuations, the minimum subsidy may be as large as $\Omega(mn)$,  in contrast to the offline setting, where the bound is $O(n)$. On the positive side, we identify valuation classes where the minimum subsidy is small (i.e., does not depend on $m$), including $k$-valued, rank-one, restricted additive, and identical valuations, and we obtain (mostly) tight subsidy bounds for these classes.
\end{itemize}
\end{abstract}

\section{Introduction}

In the decades since its formal introduction by \citet*{steinhaus1948problem} in 1948, the fair division problem has emerged as a fundamental topic at the intersection of economics and computer science. The central fair division question asks how to allocate a finite collection of items amongst a set of agents, each equipped with a valuation function over subsets of the items, in a manner that is universally regarded as fair. When the items are divisible, and with only mild assumptions on the structure of the agents’ valuations, it is possible to obtain an allocation where no agent is envious of any other, i.e., every agent (according to its own valuation) weakly prefers its assigned bundle to that of any other agent \cite{stromquist1980how}. This concept, introduced by \citet*{foley1967resource}, is now widely known as {\em envy-freeness}. However, when the items are indivisible and must each be integrally assigned to some agent, it is easy to see that envy-free allocations do not necessarily exist: when only one valuable item is available, the agent that receives it is always envied.

In a seminal work, \citet*{Maskin1987} asked whether this impossibility could be circumvented by the addition of a single divisible good (i.e., money), and if so, how much of it was needed to eliminate all envy. \citet*{Maskin1987} showed that for matching instances with $n$ agents and $m=n$ goods, if without loss of generality any specific good has a value of at most one dollar, an {\em envy-freeable} allocation always exists, and a subsidy of $n-1$ dollars is sufficient to remove all envy for some allocation. This result spurred a large body of literature investigating the existence of such envy-freeable allocations in fair division \cite{aragones1995derivation,Haake2002,halpern2019fair,BrustleDNSV20,Hartline2008}, leading to a now well-known characterization:

\textsc{Theorem} (\cite{aragones1995derivation,Haake2002,Hartline2008,halpern2019fair}). {\em An allocation is envy-freeable if and only if it is locally efficient}.

An allocation is locally efficient (LE) if, when the $n$ assigned bundles of the allocation are each frozen, no reassignment of these bundles to the $n$ agents increases the social welfare. Several papers have since examined the quantity of subsidy sufficient to guarantee envy-freeness. Prominently, the work of \citet*{BrustleDNSV20} proves that it is always possible to eliminate envy with a {\em small} total subsidy, i.e., one that does not grow with the number of items $m$.

We extend the above body of literature to a natural online setting in which items arrive one at a time, revealing their values, and must each be assigned immediately and irrevocably. Our interest in this problem arises from the simple observation that when the valuations are additive, it is possible to {\em maximize welfare} online (i.e., maintain a kind of {\em global} efficiency), implying the maintenance of LE and therefore envy-freeability online. Thus the first natural question that we raise and study is the following.

\begin{center}
\begin{tabular}{@{}l@{ }p{0.8\textwidth}@{}}
\textbf{Question 1.} & \textit{Going beyond additive valuations, for what valuation classes 
is it possible to maintain envy-freeability online?}
\end{tabular}
\end{center}

We show that it is additionally possible to maintain LE online for the classes of SPLC valuations and $k$-demand valuations (for any $k$). Interestingly, although welfare maximization is one possible strategy to obtain LE, the collection of valuation classes for which LE can be maintained online is strictly more general, i.e., this may be possible even when maximizing welfare online is impossible. In this sense, our work demonstrates a separation between classes for which online welfare maximization is possible and classes for which local efficiency can be maintained online.

In the other direction, we show that maintaining LE online is {\em impossible} for budget-additive valuations, even if the budgets are known in advance, and that it is impossible for submodular and supermodular valuations, even if the marginal values are binary.

While in the offline case, envy-freeness with a small subsidy is feasible for both additive valuations and monotone valuations, for the online problem, maintaining LE via welfare maximization results in a subsidy that grows linearly with $m$ in the worst case. This raises the second natural question that we study in this work.

\begin{center}
\begin{tabular}{@{}l@{ }p{0.8\textwidth}@{}}
\textbf{Question 2.} & \textit{Do there exist valuation classes for which it is possible to 
maintain envy-freeability online with a {\em small} subsidy, i.e., one that does not grow with $m$?}
\end{tabular}
\end{center}
We first show that when the agents have additive valuations this is impossible, and there exists an instance for which maintaining envy-freeability online necessitates a subsidy as high as $m(n-1)$. A similar lower bound extends to SPLC valuations and $k$-demand valuations (both of which generalize additive valuations when $k$ is allowed to be as large as $m$). 

In contrast, we provide a positive answer to the above question for several well-studied valuation classes, including unit-demand and bivalued valuations (and, more generally, $k$-demand and $k$-valued valuations when $k$ is independent of $m$), as well as rank-one, restricted additive, and identical monotone valuations. The definitions of these classes, along with a more detailed discussion of their properties, are presented in the respective sections.

All of our positive results are obtained algorithmically. Through adversarial constructions, we show that nearly all of our algorithms are \textit{exactly} tight, i.e., they have the best possible bound on the total subsidy, even up to coefficients. Interestingly, for several of these classes this bound is asymptotically larger (as a function of $n$) than the corresponding subsidy bound in the offline version of the problem, thereby providing strong separations between the online and offline settings.\footnote{For instance, a result of \citet*{BrustleDNSV20} shows that a total subsidy of $n-1$ suffices for additive valuations. Our results demonstrate that the minimum subsidy is $\Theta(n^2)$ even for structured subclasses of additive valuations, including binary additive valuations and rank-one valuations.} 

Table~\ref{tab:valuation-questions} shows a high-level summary of our results.

\newcommand{\MarkLeftMargin}{3.0em}
\newcommand{\MarkIndent}{4.5em} 

\newlength{\MarkRefTotalWidth}
\newlength{\MarkRefContentWidth}
\newcommand{\setmarkrefwidth}[1]{%
  \settowidth{\MarkRefTotalWidth}{\footnotesize (#1)}%
  \settowidth{\MarkRefContentWidth}{\footnotesize #1}%
}

\newcommand{\markonly}[1]{%
  \parbox[t]{\linewidth}{%
    \hangindent=\MarkIndent \hangafter=1 \noindent
    \hspace*{\MarkLeftMargin}%
    \makebox[0pt][l]{#1}%
    \hspace*{\dimexpr\MarkIndent-\MarkLeftMargin\relax}%
  }%
}

\newcommand{\markref}[2]{%
  \parbox[t]{\linewidth}{%
    \hangindent=\MarkIndent \hangafter=1 \noindent
    \hspace*{\MarkLeftMargin}%
    \makebox[0pt][l]{#1}%
    \hspace*{\dimexpr\MarkIndent-\MarkLeftMargin\relax}%
    \makebox[\MarkRefTotalWidth][l]{\footnotesize (#2)}%
  }%
}

\newcommand{\linefill}{\rule[0.5ex]{\MarkRefContentWidth}{0.4pt}}

\setmarkrefwidth{Thm \ref{thm:single_param}}

\begin{table}[t]
\centering
\begin{tabular}{|P{0.25\linewidth}|P{0.25\linewidth}|P{0.25\linewidth}|}
\hline
\multicolumn{1}{|c|}{\multirow{3}{*}{\textbf{Valuation Class}}} & \textbf{Question 1} & \textbf{Question 2} \\
 & {Can EF be maintained \textbf{online}?} &  {Can we get EF with \textbf{small} subsidy?} \\
\hline\hline
Budget Additive & \markref{\xmark}{Thm \ref{thm:budget_additive}} & \markref{\xmark}{\linefill}  \\
\hline
Binary Submodular & \markref{\xmark}{Thm \ref{thm:binary_submodular}} & \markref{\xmark}{\linefill} \\
\hline
Binary Supermodular & \markref{\xmark}{Thm \ref{thm:binary_supermodular}} & \markref{\xmark}{\linefill} \\
\hline
\hline
SPLC & \markref{\cmark}{Thm \ref{thm:splc}} & \markref{\xmark}{\linefill} \\
\hline
{Additive} & \markref{\cmark}{\linefill} & \markref{\xmark}{Thm \ref{thm:additive-lb}} \\
\hline\hline
$k$-Demand  & \markref{\cmark}{\linefill} & \markref{\cmark$^\ast$}{Thm \ref{thm:k-demand-subsidy}} \\
\hline
$k$-Valued & \markref{\cmark}{\linefill} & \markref{\cmark$^\ast$}{Thm \ref{thm:k-valued}}
\\
\hline
Rank-One & \markref{\cmark}{\linefill} &
\markref{\cmark}{Thm \ref{thm:single_param}} \\
\hline
Restricted Additive & \markref{\cmark}{\linefill} & \markref{\cmark}{Thm \ref{thm:restrictedadditive}} \\
\hline
Identical Monotone & \markref{\cmark}{\linefill} & \markref{\cmark}{Thm \ref{thm:identical_monotone}}  \\
\hline
\end{tabular}
\caption{Summary of our results for Questions 1 and 2 for the valuation classes considered in this work. $(^\ast)$ For $k$-valued and $k$-demand valuations, our subsidy bound is small, as it does not depend on $m$ but does depend on $k$.}
\label{tab:valuation-questions}
\end{table}

We remark that a key modeling choice in our work is to track the minimum subsidy required to guarantee envy-freeness at each step, rather than maintaining an explicit online update of the payment vector. At any point during the online arrival of items, the mechanism is prepared to obtain an envy-free allocation with subsidy should the process terminate. This choice aligns with applications in which the online arrival of items may halt unpredictably, and the envy-freeness guarantee must be immediately realized with bounded subsidy.

Finally, we note that most of our results extend to the related chore-division problem, where each item adds a \textit{cost} to the agent that receives it. Section~\ref{sec:discussion} contains a high-level discussion of the problem of bounding the subsidy in the fair division of chores.

\subsection{Additional Related Work}

While this work introduces the online fair division with subsidy problem with online items and offline agents, there is a large body of research on each of these topics taken individually. See the survey by \citet*{amanatidis2023fair} for more details on these topics.

\paragraph{Fair Division With Subsidy.} The problem of finding a fair allocation with subsidy has been considered in several contexts, including rent division \cite{edward1999rental,gal2017fairest}, item allocation \cite{tadenuma1993fair,aragones1995derivation,klijn2000algorithm,halpern2019fair,BrustleDNSV20,caragiannis2021computing,BarmanKNS22,teh2024envy,aziz2025weighted,la2025discrepancy} and job scheduling \cite{Hartline2008,cohen2010envy,feldman2025proportionally}, with a variety of objectives including bounding payments, measuring social welfare, capping makespan, and approximating minimum subsidy.

\paragraph{Online Fair Division.} The problem of finding fair allocations in online settings has been extensively explored in the literature on online fair division.
Prior work has examined various formulations of the problem, including settings with online items and offline agents \cite{he2019achieving,gkatzelis2021fair,HalpernPVX25}, offline items and online agents \cite{walsh2011online,kash2014no,kulkarni2025online}, as well as randomized mechanisms \cite{AleksandrovAGW15,BenadeKPPZ24}. Primary research objectives include minimizing the maximum envy, bounding the number of online adjustments, approximating social welfare, and approximating the maximin share.

\section{Preliminaries}

We consider a fair division setting with a set of agents $[n] = \{1, 2, \ldots ,n\}$ and a set of items $[m] = \{1, 2, \ldots, m\}$. Each agent $i \in [n]$ is equipped with a valuation function $v_i : 2^{[m]} \to \mathbb{R}_{\geq 0}$. We make the standard assumptions that each valuation is monotone, i.e. $v_i(S) \leq v_i(T)$ for all $S \subseteq T \subseteq [m]$, and normalized, i.e. $v_i(\emptyset) = 0$. Our theorems are written for the setting where the items are goods and the marginal values are therefore nonnegative, but, as we discuss later, a similar analysis can be extended to the case of chores without much further effort.

We consider some important types of instances with additional assumptions on the structure of the valuations, which we define in the corresponding subsections.

Our task is to select an allocation $X = (X_1, X_2, \ldots, X_n)$ and a subsidy vector $p = (p_1, \ldots, p_n)$. An allocation is a partition of $[m]$ into $n$ bundles such that $X_1 \cup X_2 \cup \ldots \cup X_n = [m]$ and $X_i \cap X_j = \emptyset$ for $i \neq j$. A subsidy vector must satisfy $p_i \geq 0$ for all $i \in [n]$.

\subsection{Fairness Notions and Local Efficiency}

Envy-freeness, a central fairness notion in the fair division literature requires that no agent envies the bundle of another agent. An allocation $X = (X_1, \ldots, X_n)$ is envy-free (EF) if $v_i(X_i) \geq v_i(X_j)$ for all $i, j\in[n]$. Since envy-free allocations do not necessarily exist, we need the use of a subsidy.

\begin{definition}[Envy-Freeness with Subsidy, Envy-Freeability]\label{def:ef_subsidy}
    An allocation $X = (X_1, \ldots, X_n)$ together with a subsidy vector $p = (p_1, \ldots, p_n)$ is envy-free if $v_i(X_i) + p_i \geq v_i(X_j) + p_j$ for all $i, j\in[n]$. An allocation $X$ is said to be envy-freeable if a subsidy vector satisfying this condition exists.
\end{definition}

The social welfare of an allocation is the sum of the values obtained by all agents, i.e., $\sw(X) = \sum_{i \in [n]} v_i(X_i)$. A well-known result characterizes envy-freeness via local efficiency.

\begin{definition}[Local Efficiency]\label{def:local_efficiency}
An allocation $X = (X_1, \ldots, X_n)$ is locally efficient if there is no permutation of the agents that increases the social welfare, i.e., $X$ is LE if for all $\pi : [n] \to [n]$ we have $\sum_{i \in [n]} v_i(X_i) \geq \sum_{i \in [n]} v_{\pi(i)}(X_i)$.
\end{definition}

\begin{theorem}[\cite{aragones1995derivation,Haake2002,Hartline2008,halpern2019fair}]\label{thm:ef-le}
An allocation $X = (X_1, \ldots, X_n)$ is envy-freeable if and only if $X$ is locally efficient.
\end{theorem}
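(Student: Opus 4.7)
My plan is to prove the two implications separately, using the standard envy-graph construction to handle the nontrivial direction.

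For the forward direction, assume $(X, p)$ is envy-free with $p \geq 0$, and fix an arbitrary permutation $\pi : [n] \to [n]$. Applying the envy-freeness inequality from Definition~\ref{def:ef_subsidy} to the pair $(\pi(i), i)$ for each $i$ gives $v_{\pi(i)}(X_{\pi(i)}) + p_{\pi(i)} \geq v_{\pi(i)}(X_i) + p_i$. Summing over $i \in [n]$ and using that $\sum_i v_{\pi(i)}(X_{\pi(i)}) = \sum_i v_i(X_i)$ and $\sum_i p_{\pi(i)} = \sum_i p_i$ (since $\pi$ is a bijection), the subsidy terms cancel and the local efficiency inequality from Definition~\ref{def:local_efficiency} falls out.

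For the converse, I would build the envy graph $G$ on vertex set $[n]$ in which every ordered pair $(i, j)$ with $i \neq j$ is a directed arc of weight $w_{ij} := v_i(X_j) - v_i(X_i)$. Setting $\sigma := \pi^{-1}$, local efficiency is equivalent to $\sum_i v_i(X_i) \geq \sum_i v_i(X_{\sigma(i)})$ for every permutation $\sigma$. Decomposing $\sigma$ into disjoint cycles, each cycle $j_1 \to j_2 \to \cdots \to j_k \to j_1$ (with $\sigma(j_\ell) = j_{\ell+1}$) contributes exactly $\sum_\ell w_{j_\ell, j_{\ell+1}}$ to the welfare change $\sum_i [v_i(X_{\sigma(i)}) - v_i(X_i)]$, so local efficiency is equivalent to $G$ containing no directed cycle of strictly positive total weight. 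Given this condition, for each $i$ let $p_i$ be the maximum total weight of any directed path in $G$ starting at $i$; this maximum is attained by a simple path and is therefore finite. For every arc $i \to j$, concatenating it with a maximum-weight path starting at $j$ yields $p_i \geq w_{ij} + p_j$, which rearranges exactly to $v_i(X_i) + p_i \geq v_i(X_j) + p_j$. Adding a common constant to every $p_i$ preserves all such inequalities and enforces $p \geq 0$, exhibiting the desired subsidy vector.

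The main conceptual step is the reduction of local efficiency, a condition quantified over all $n!$ permutations, to the graph-theoretic statement that no directed cycle in $G$ has strictly positive weight; once this reduction is in hand, the longest-path construction of the subsidies is a standard Bellman-Ford-style argument, and the sign/nonnegativity details follow by a uniform shift.
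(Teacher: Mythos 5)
Your proof is correct. The paper itself only cites this theorem rather than proving it, but your argument for the nontrivial direction (local efficiency implies envy-freeability) uses exactly the same machinery the paper develops in the proof of Lemma~\ref{lem:additive-paths} --- the envy graph, the equivalence of local efficiency with the absence of positive-weight directed cycles, and heaviest-path weights as the subsidy vector --- and your forward direction is the standard telescoping-sum-over-a-permutation argument.
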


Clearly, any allocation that maximizes the social welfare (across all possible allocations) is locally efficient.

\subsection{Online Setting}

We consider an online setting where items $1, 2, \ldots, j$ arrive one by one and must be irrevocably allocated upon arrival. This means that when item $j$ arrives, the items $1, 2, \ldots, j-1$ are already allocated to some agents, and the algorithm must allocate item $j$ before item $j+1$ arrives. We assume that the online algorithm has value-oracle access, meaning that when item $j$ arrives, the algorithm can access any agent's value for any bundle containing the items that have arrived up to that point, i.e., the algorithm can query $v_i(S)$ for any $i \in [n]$ and $S \subseteq \{1, \ldots, j\}$.
We further assume that the algorithm selects a subsidy vector $p = (p_1, \ldots, p_n)$ after all items have been allocated, and we aim to construct algorithms that ensure envy-freeness with subsidy (Definition~\ref{def:ef_subsidy}) at the end of the execution, once all items have arrived.

Note that in our online model, the algorithm does not know the number of items in advance. Any algorithm that guarantees envy-freeness for online instances must maintain local efficiency at every step, i.e., whenever the online arrival of items (unpredictably) terminates, the algorithm must output a locally efficient allocation and a subsidy vector that eliminates all envy. Consequently, rather than updating the subsidies online, the algorithm must only maintain a locally efficient allocation online, along with a minimum envy-eliminating payment vector for that allocation.

\subsection{Minimum Subsidy and the Envy Graph} \label{subsec:sub_min}

For classes of valuations where envy-freeness with subsidies is attainable, our goal is to design online algorithms that use as little total subsidy as possible, formally defined below.\footnote{Note that a simple reduction from the Partition problem shows that deciding whether a given instance has an envy-free allocation is NP-hard. Thus the optimization problem of \textit{minimizing} the subsidy for a given instance is also NP-hard.}

\begin{definition}[Total Subsidy]
A subsidy vector $p = (p_1, \ldots, p_n)$ is said to use total subsidy at most $X$ if $\sum_{i \in [n]} p_i \leq X$.
\end{definition}

Throughout this paper, as in prior work on bounding the subsidy, we assume that the marginal value of any item is at most $1$, i.e., for any $S \subseteq [m]$ and $g \in [m]$, $v_i(S \cup \{g\}) - v_i(S) \leq 1$. This assumption can be made without loss of generality by appropriately scaling all valuations by the same factor, which only changes the relative units in which the subsidy is measured.

For a \textit{given} envy-freeable allocation, it was shown by \cite{halpern2019fair} that an envy-eliminating subsidy vector that minimizes the subsidy for each agent (and therefore also minimizes the total subsidy) can be found in polynomial time. In this work, rather than bounding the per-agent subsidy, we focus on bounding the total subsidy, which is at most $n$ times larger. From the perspective of our notion that a {\em small} subsidy is independent of the number of items, the two objectives are essentially equivalent.

For the analysis of fair allocations with subsidies, we will utilize an auxiliary directed graph defined for every allocation called the \textit{envy graph}. For a given fair division instance, and given allocation $X$, the envy graph $G_X$ is constructed as follows. The agents $[n]$ form the set of vertices of this graph, and there is a weighted directed arc for every ordered pair of agents. The weight on arc $(i,k)$ is the envy that agent $i$ has for agent $k$ in $X$, i.e., $w_{G_X}(i,k) = v_i(X_k) - v_i(X_i)$.

\citet*{halpern2019fair} observed that if an allocation is locally efficient, then the corresponding envy graph does not have any directed cycles of positive weight. Thus the heaviest path starting at any vertex $i\in V(G_X)$ is well-defined.
An important lemma for bounding the total subsidy, proved in \cite{halpern2019fair} for additive valuations and in \cite{BrustleDNSV20} for monotone valuations, is as follows. We provide a proof for completeness.

\begin{lemma}[\cite{halpern2019fair,BrustleDNSV20}]\label{lem:additive-paths}
Given a fair division instance with $m$ goods, $n$ agents, and monotone valuation functions $v_i$, consider an allocation $X=(X_1,\ldots,X_n)$ that is locally efficient. Let $\ell(i)$ denote the weight of the heaviest path in the envy graph $G_X$ starting at vertex $i$. Then the minimum subsidy required to make $X$ envy-free is exactly $\sum_{i \in [n]}\ell(i)$. Moreover, this quantity can be bounded above by $m(n-1)$.
\end{lemma}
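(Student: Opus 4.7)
The plan is to prove both equality ($\min\text{-subsidy} = \sum_i \ell(i)$) and the upper bound $m(n-1)$ separately, after first observing the standard structural fact that local efficiency rules out positive-weight cycles in $G_X$. Concretely, if $G_X$ contained a directed cycle of positive total weight on agents $i_1 \to i_2 \to \cdots \to i_s \to i_1$, then the cyclic permutation $\pi(i_j) = i_{j+1}$ would yield $\sum_j v_{\pi(i_j)}(X_{i_j}) > \sum_j v_{i_j}(X_{i_j})$, contradicting LE. In particular, every heaviest path from a vertex can be taken to be simple, so $\ell(i)$ is well-defined and uses at most $n-1$ edges; moreover $\ell(i) \geq 0$ since the empty path at $i$ has weight $0$.

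For sufficiency, I would set $p_i := \ell(i)$. Since every path starting at $i$ can begin by traversing the arc $(i,k)$ and then following a heaviest path from $k$, we get $\ell(i) \geq w_{G_X}(i,k) + \ell(k) = v_i(X_k) - v_i(X_i) + \ell(k)$ for every pair $i,k$, which rearranges to $v_i(X_i) + p_i \geq v_i(X_k) + p_k$, i.e., envy-freeness with subsidy. Combined with $p_i \geq 0$, this is a valid envy-eliminating subsidy of total value $\sum_i \ell(i)$.

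For necessity, suppose $p$ is any envy-eliminating vector and fix a heaviest simple path $i = i_0 \to i_1 \to \cdots \to i_t$ realizing $\ell(i)$. Each envy-freeness inequality gives $p_{i_j} - p_{i_{j+1}} \geq w_{G_X}(i_j, i_{j+1})$; telescoping yields $p_{i_0} - p_{i_t} \geq \ell(i)$, and since $p_{i_t} \geq 0$ we conclude $p_i \geq \ell(i)$. Summing over $i$ shows $\sum_i p_i \geq \sum_i \ell(i)$, matching the upper bound and establishing that the minimum subsidy equals $\sum_i \ell(i)$.

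Finally, for the $m(n-1)$ bound, I would use the scaling assumption that marginal values are at most $1$, which combined with monotonicity gives $v_j(S) \leq |S|$ for all $j,S$. For a heaviest simple path $i = i_0 \to \cdots \to i_t$ from $i$, bounding each arc weight above by $v_{i_j}(X_{i_{j+1}})$ and summing yields
\[
\ell(i) \;\leq\; \sum_{j=0}^{t-1} v_{i_j}(X_{i_{j+1}}) \;\leq\; \sum_{j=1}^{t} |X_{i_j}| \;\leq\; \sum_{k \neq i} |X_k| \;=\; m - |X_i|,
\]
where the penultimate inequality uses that the vertices $i_1,\ldots,i_t$ are distinct and different from $i$. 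Summing over $i \in [n]$ gives $\sum_i \ell(i) \leq nm - \sum_i |X_i| = m(n-1)$. The only mildly subtle step is justifying that heaviest paths may be taken simple (from the no-positive-cycle property) so that the counting argument on distinct vertices goes through; everything else is a direct telescoping/charging computation.
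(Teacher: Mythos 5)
Your proof is correct and, at its core, follows the same strategy as the paper's: use the no-positive-cycle property of the envy graph, take $p_i = \ell(i)$, and verify envy-freeness via the triangle inequality $\ell(i) \geq w_{G_X}(i,k) + \ell(k)$. There are two differences worth noting. First, you prove \emph{both} directions of the equality: the paper's proof only establishes sufficiency (that the subsidy vector $p_i = \ell(i)$ eliminates envy, so the minimum is at most $\sum_i \ell(i)$), whereas your telescoping argument $p_{i_0} - p_{i_t} \geq \ell(i)$ together with $p_{i_t} \geq 0$ also gives the matching necessity bound $p_i \geq \ell(i)$ that the word ``exactly'' in the statement actually requires; your write-up is the more complete one on this point. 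Second, your accounting for the $m(n-1)$ bound is slightly different: the paper bounds each $\ell(i)$ by $m$ and then observes that acyclicity forces some agent $i^*$ to have $\ell(i^*) = 0$, giving $m(n-1)$ as $m$ times the remaining $n-1$ agents, while you prove the sharper per-agent bound $\ell(i) \leq m - |X_i|$ and sum to get $nm - m$. Both are valid; yours avoids invoking the existence of a zero-potential agent. One small slip: in your cycle argument the permutation is oriented the wrong way. A positive-weight cycle $i_1 \to \cdots \to i_s \to i_1$ gives $\sum_j v_{i_j}(X_{i_{j+1}}) > \sum_j v_{i_j}(X_{i_j})$, so the welfare-improving reassignment hands bundle $X_{i_{j+1}}$ to agent $i_j$ (i.e., $\pi(i_{j+1}) = i_j$ in the notation $\sum_i v_{\pi(i)}(X_i)$), not $\pi(i_j) = i_{j+1}$ as written; with your $\pi$ the resulting sum is $\sum_j v_{i_{j+1}}(X_{i_j})$, which is not controlled by the cycle weight. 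The fix is a one-line reindexing and does not affect anything downstream.
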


\begin{proof}
    Since the allocation $X$ is envy-freeable, from Theorem~\ref{thm:ef-le}, it is locally efficient. Therefore, the envy graph $G_X$ corresponding to this allocation is acyclic. Define $\ell(i)$ as the weight of the heaviest (possibly empty) path starting at agent $i$ in the envy graph. Suppose this path is $(i = i_1, \ldots, i_k)$ for some $k \in [n]$. Then $\ell(i) = \sum_{j \in [k-1]} w_{G_X}(i_j,i_{j+1})$ where $w_{G_X}(i_j,i_{j+1})$ is the weight on the arc from agent $i_j$ to agent $i_{j+1}$, which is equal to the corresponding envy. Now, note that since the maximum marginal value of any good is upper bounded by 1, $w_{G_X}(i_j,i_{j+1}) \leq v_{i_j}(X_{i_{j+1}}) \leq |X_{i_{j+1}}|$. Therefore, $\sum_{j \in [k-1]} w_{G_X}(i_j,i_{j+1}) \leq \sum_{j \in [k-1]}|X_{i_{j+1}}| \leq m$. Further, since the envy graph has no positive weight directed cycles, there is an agent $i^*$ such that $\ell(i^*) = 0$, i.e., all paths starting at this agent have negative weight. Now, if we show that the minimum subsidy is bounded by $\sum_{i \in [n]} \ell(i)$, this implies a bound of $m(n-1)$ on the total subsidy, completing the proof.

    It remains to show that the minimum subsidy is bounded by $\sum_{i \in [n]}\ell(i)$. Suppose we assign to each agent $i$ the subsidy $p_i = \ell(i)$. For any pair $i,j$ of agents, we have $v_i(X_i) + p_i = v_i(X_i) + \ell(i) = (v_i(X_j) - w_{G_X}(i,j)) + \ell(i) \geq v_i(X_j) - w_{G_X}(i,j) + \ell(j) + w_{G_X}(i,j) = v_i(X_j) + \ell(j)$, where the inequality follows from the triangle inequality in the graph $G_X$. This completes the proof.
\end{proof}

\section{Boundaries of Envy-Freeability}

Recall that in order to maintain envy-freeability, we are tasked with maintaining local efficiency online; that is, after each incoming item is assigned to some agent, we require that no permutation of the bundles increases the social welfare of the resulting allocation.

We begin our analysis with additive valuations, which admit an interesting property in the online setting: when a new item arrives, for any locally efficient partial allocation (of the items that have previously arrived), there exists an agent to which the new item can be assigned such that the resulting allocation is locally efficient. In other words, local efficiency can always be extended online.

\begin{proposition} \label{prop:local-eff}
    In a fair division instance with additive valuations, let $S \subseteq [m]$ be a subset of items, and let $g\in [m]\setminus S$ be an item not in $S$. Given any partial allocation $X$ of the items in $S$ that is locally efficient, let $X'$ be the allocation obtained by assigning $g$ to some agent $i^*\in\argmax_{i\in[n]} v_i(\{g\})$. Then $X'$ is locally efficient.
\end{proposition}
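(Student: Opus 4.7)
The plan is to argue by contradiction: suppose $X'$ is not locally efficient, so that some permutation $\pi:[n]\to[n]$ satisfies $\sum_{i\in[n]} v_{\pi(i)}(X'_i) > \sum_{i\in[n]} v_i(X'_i)$. The goal is to use this together with additivity to derive an inequality that simultaneously violates both (a) the local efficiency of $X$ and (b) the choice of $i^*$ as the maximizer of $v_i(\{g\})$.

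First I would expand each side using additivity of the $v_j$'s. Since $X'_i = X_i$ for $i\neq i^*$ and $X'_{i^*} = X_{i^*}\cup\{g\}$, additivity gives $v_j(X'_{i^*}) = v_j(X_{i^*}) + v_j(\{g\})$ while $v_j(X'_i) = v_j(X_i)$ for $i\neq i^*$. Plugging this in, the hypothetical strict inequality collapses to
\[
v_{\pi(i^*)}(\{g\}) - v_{i^*}(\{g\}) \;>\; \sum_{i\in[n]} v_i(X_i) - \sum_{i\in[n]} v_{\pi(i)}(X_i).
\]

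Next I would observe that the right-hand side is nonnegative because $X$ is locally efficient (applied to the same permutation $\pi$), while the left-hand side is nonpositive because $i^*\in\argmax_i v_i(\{g\})$, so in particular $v_{\pi(i^*)}(\{g\}) \leq v_{i^*}(\{g\})$. These two facts together contradict the strict inequality above, which proves that no welfare-improving permutation of $X'$ can exist, i.e., $X'$ is locally efficient.

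There isn't really a delicate obstacle here; the whole argument is driven by the separability afforded by additivity, which lets us cleanly pull out the marginal contribution of $g$ from the rest of the social welfare. The only thing one should be a little careful about is that this clean decomposition truly requires additivity (not just monotonicity or submodularity), since otherwise $v_{\pi(i^*)}(X_{i^*}\cup\{g\}) - v_{\pi(i^*)}(X_{i^*})$ need not equal $v_{\pi(i^*)}(\{g\})$ and the two inequalities no longer line up to a contradiction. This is consistent with the paper's later negative results showing that the proposition fails for the richer valuation classes.
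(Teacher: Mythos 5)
Your proof is correct and follows essentially the same route as the paper's: decompose the welfare via additivity to peel off the marginal contribution of $g$, then combine $v_{\pi(i^*)}(\{g\}) \leq v_{i^*}(\{g\})$ with the local efficiency of $X$ to contradict the assumed welfare-improving permutation. The only cosmetic difference is that you isolate the $g$-terms on one side of the inequality, whereas the paper presents the same cancellation as a chain of implications.
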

\begin{proof}
    Suppose, for a contradiction, that there exists a permutation $\pi : [n] \to [n]$ such that reallocating the bundles of $X'$ along $\pi$ increases the social welfare. Then, since the valuations are additive, we have
    \begin{align}
        \sum_{i \in [n]} v_{\pi(i)}(X'_i) &> \sum_{i \in [n]} v_i(X'_i)\nonumber\\
        \Longrightarrow \quad v_{\pi(i^*)}(X'_{i^*}\setminus\{g\}) &+ \sum_{i \in [n]\setminus \{i^*\}} v_{\pi(i)}(X'_i) \nonumber\\
        >\;v_{i^*}(X'_{i^*}\setminus\{g\}) &+ \sum_{i \in [n]\setminus \{i^*\}} v_i(X'_i) \label{eq1}\\
        \Longrightarrow \quad \sum_{i \in [n]} v_{\pi(i)}(X_i) &> \sum_{i \in [n]} v_i(X_i),\nonumber
    \end{align}
    which contradicts the fact that $X$ is locally efficient.
\end{proof}

The above proposition immediately gives us a simple algorithm for maintaining local efficiency online when the agents have additive valuations: assign each item to an agent that has the highest marginal value for it, thereby maximizing the social welfare. Observe that if the valuations are not additive, the implication in (\ref{eq1}) is no longer necessarily true. This raises the natural question of whether local efficiency can be maintained online for classes beyond additive valuations. We answer this question either affirmatively or negatively for a variety of valuation classes. We begin with our positive results.

\subsection{LE Can Be Maintained Online}

We show that LE can be maintained online for the class of SPLC valuations, and the class of $k$-demand valuations for any $k$ (i.e., we allow $k$ to be as large as $m$). These valuation classes each generalize the class of additive valuations.

\subsubsection{SPLC Valuations}\label{sec:splc}

A valuation function $v_i$ is SPLC (Separable Piecewise-Linear Concave), if the set of goods can be partitioned into $T$ types such that any two items of the same type are identical, for type $t\in[T]$ there is a set $M_t$ of goods of that type, and the agent has a value $v^t_{i,\ell}$ associated with the $\ell^\text{th}$ copy of any good of type $t$. Across different types, the valuation function is additive. The function is concave, so that $v^t_{i,1} \geq v^t_{i,2} \geq \cdots \geq v^t_{i,|M_t|}$. Putting this all together, the value for a given bundle $S$ is $v_i(S) = \sum_{t\in[T]} \sum_{\ell\in [|S\cap M_t|]} v^t_{i,\ell}$. These valuations, also called additively separable concave valuations, have been previously studied in the context of fair division \cite[e.g.][]{anari2018nash,chaudhury2022fair,chekuri20241}.

\begin{theorem}\label{thm:splc}
    Envy-freeability can be maintained online for all instances with SPLC valuations.
\end{theorem}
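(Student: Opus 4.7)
My plan is to exhibit an online algorithm whose output, at every step, maximizes the social welfare among all allocations of the items seen so far. Since any welfare-maximal allocation is locally efficient (no permutation of bundles can exceed the maximum), Theorem~\ref{thm:ef-le} then yields envy-freeability. The algorithm is the natural generalization of the additive greedy rule of Proposition~\ref{prop:local-eff}: when an item $g$ of type $t$ arrives, assign it to an agent $i^* \in \argmax_{i\in[n]} v^t_{i,\, k_i(t)+1}$, where $k_i(t) = |X_i \cap M_t|$ is the number of type-$t$ items currently held by agent $i$. Operationally, the algorithm does not need to know $T$ or the types; it simply assigns each arriving item to an agent maximizing its current marginal value $v_i(X_i\cup\{g\})-v_i(X_i)$, which for SPLC valuations coincides with $v^t_{i,\,k_i(t)+1}$.

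The key structural observation is that the SPLC welfare decomposes additively across types:
\[
\sw(X) \;=\; \sum_{t\in[T]}\; \sum_{i\in[n]}\; \sum_{\ell=1}^{k_i(t)} v^t_{i,\ell},
\]
so online welfare maximization decouples into $T$ independent concave-separable problems. For a single type $t$, after $K$ arrivals one wishes to maximize $\sum_i f^t_i(c_i)$ subject to $\sum_i c_i = K$, where $f^t_i(c)=\sum_{\ell\leq c} v^t_{i,\ell}$ is concave in $c$ by the monotonicity $v^t_{i,1}\geq v^t_{i,2}\geq\cdots$. A clean way to prove greedy-optimality is to note that, because each per-agent sequence of marginals is non-increasing, any selection of the $K$ globally largest marginals $v^t_{i,\ell}$ (with tie-breaking that prefers smaller $\ell$ within a fixed $i$) is automatically \emph{prefix-consistent}: whenever $v^t_{i,\ell}$ is chosen, every $v^t_{i,\ell'}$ with $\ell'<\ell$ must also be chosen since those values are at least as large. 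Hence this top-$K$ multiset corresponds to a valid allocation, and any prefix-consistent selection of size $K$ has sum at most this one. Greedy, which at each arrival picks the largest currently available marginal $v^t_{i,k_i(t)+1}$ over $i$, produces exactly this top-$K$ set after $K$ arrivals of type $t$ and is therefore optimal after every step.

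Combining the per-type optimality with the additive decomposition shows that greedy's allocation is globally welfare-maximal after every arrival, hence locally efficient and envy-freeable. The main obstacle lies in the concave-exchange step verifying that the top-$K$ selection of marginals is both feasible and optimal for the per-type subproblem; however, because the marginals along each agent's sequence are non-increasing, prefix-consistency is automatic and no delicate swap-based argument is required. I therefore expect the proof to be a clean extension of the additive case, with the per-type decomposition as the only genuinely new ingredient.
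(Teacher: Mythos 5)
Your proposal is correct and follows essentially the same route as the paper's proof: assign each arriving item to an agent with the highest marginal value, decompose the SPLC welfare additively across types, and use the non-increasing per-agent marginals to argue that greedy selects the top-$K$ marginals for each type, which is feasible and welfare-maximizing (hence locally efficient, hence envy-freeable). Your explicit ``prefix-consistency'' observation is just a slightly more detailed phrasing of the paper's argument that the highest marginals chosen form a valid assignment.
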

\begin{proof}
    The algorithm for maintaining LE in this case is simple: it assigns each incoming good to the agent with the highest marginal value for it. We show that, even for SPLC valuations, this algorithm finds a social welfare maximizing allocation and is therefore locally efficient. The social welfare of an allocation $X = (X_1, \ldots, X_n)$ is
    \begin{align*}
        \sw(X) &= \sum_{i \in [n]} v_i(X_i) \\
        &= \sum_{i \in [n]} \sum_{t\in[T]} \sum_{\ell\in [|S\cap M_t|]} v^t_{i,\ell} \\
        &= \sum_{t\in[T]} \sum_{i \in [n]} \sum_{\ell\in [|S\cap M_t|]} v^t_{i,\ell}
    \end{align*}
    where we swap the order of summation in the last equality. To prove that our allocation maximizes welfare, it therefore suffices to prove that for each type $t \in [T]$, it maximizes $\sum_{i \in [n]} \sum_{\ell\in [|S\cap M_t|]} v^t_{i,\ell}$.
    
    Now, fix a type $t \in [T]$. This type has a total of $|M_t|$ copies in the instance, so each agent has a set of $|M_t|$ marginal values for different copies of this good for a total of $n|M_t|$ possible marginal values. These valuations are concave i.e., each additional copy has a decreasing marginal value for any agent. Since the algorithm allocates each incoming copy to the agent with the highest marginal value, the algorithm will choose the highest $|M_t|$ marginal values for allocating this good, which maximizes the social welfare amongst all possible assignments.
\end{proof}

\subsubsection{$k$-Demand Valuations}
\label{sec:kdemand}
A valuation function $v_i$ belongs to the class of $k$-demand valuations if the value of any bundle $S$ is equal to the sum of the values of the $k$ highest-valued items, valued as singletons, in $S$. That is, every item $j\in[m]$ has an associated value $v_i(\{j\})$, and the value of a non-empty bundle $S$ is $v_i(S) = \max_{T\subseteq S:|T|\leq k} \sum_{j\in T}v_i(\{j\})$. This class notably generalizes the well-studied class of unit-demand valuations (when $k=1$), and the class of additive valuations (when $k\geq m$), and models scenarios where an agent can hold at most $k$ items; see \cite[e.g.][]{ByrkaMP25,ZhangC20,DeligkasMS21}.

Interestingly, unlike the case of additive valuations or SPLC valuations, for $k$-demand valuations it is impossible to maintain a social-welfare-maximizing allocation online, as shown in the following example. Consider an instance with unit-demand valuations, with $n=2$ agents and $m=2$ items, where agent 1 has values 0.75 and 1 respectively for the two items, and agent 2 has values 0.5 and 0.25 respectively for the two items. The welfare-maximizing allocation for this instance gives the second item to agent 1 and the first item to agent 2, for a total value of 1.5. However, in the online setting, when item 1 arrives, it is assigned to agent 1 (since $0.75 > 0.5$) to maximize welfare. Nonetheless, we show that for $k$-demand valuations it is possible to maintain local efficiency (and envy-freeability) in the online setting.

\begin{theorem}
\label{thm:le-k-demand}
    Envy-freeability can be maintained online for all instances with $k$-demand valuations.
\end{theorem}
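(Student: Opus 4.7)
The plan is to use the \emph{singleton-greedy} algorithm: when item $g$ arrives, assign it to an agent $a \in \argmax_{i \in [n]} v_i(\{g\})$, breaking ties by a fixed rule (say smallest index). I claim this maintains local efficiency, and I would prove it by contradicting the existence of a positive cycle in the updated envy graph.

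The crucial invariant maintained by singleton-greedy is that $v_i(\{j\}) \geq v_\ell(\{j\})$ for every $\ell$ whenever $j \in X_i$, i.e.\ $v_i \geq v_\ell$ pointwise on $X_i$. A standard top-$k$ argument promotes this to the bundle level: if $v_i \geq v_\ell$ pointwise on $S$, then listing both agents' singleton values on $S$ in nonincreasing order, the $t$-th largest for $i$ dominates the $t$-th largest for $\ell$ for every $t$. This yields $v_i(S) \geq v_\ell(S)$, the displacement thresholds $s_i^S \geq s_\ell^S$ (where $s_i^S$ denotes the smallest $v_i$-singleton among $i$'s top-$\min(k,|S|)$ items on $S$), and the quantitative refinement $v_i(S) - v_\ell(S) \geq s_i^S - s_\ell^S$ obtained by dropping all but the last term in the telescoping sum of top-value differences.

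Suppose, for contradiction, that the new allocation $X'$ (with $g$ added to $a$) is not LE. By Theorem~\ref{thm:ef-le}, $G_{X'}$ has a positive-weight directed cycle $C$. Since $G_X$ has no positive cycle and only arcs incident to $a$ changed, $C$ must pass through $a$; write $C : a \to a_1 \to \cdots \to a_{\ell-1} \to a$, set $h := a_{\ell-1}$, and let $\delta_i^a := v_i(X_a \cup \{g\}) - v_i(X_a)$. A direct computation gives $W_{G_{X'}}(C) = W_{G_X}(C) + \delta_h^a - \delta_a^a$, so the assumption becomes $\delta_h^a - \delta_a^a > -W_{G_X}(C) \geq 0$.

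Two inequalities finish the proof. First, using the piecewise form of the $k$-demand marginal---$\delta_i^a = v_i(\{g\})$ when $|X_a| < k$, and $\delta_i^a = \max\{0, v_i(\{g\}) - s_i^a\}$ otherwise---a brief case analysis combined with $v_a(\{g\}) \geq v_h(\{g\})$ (singleton-greedy) and $s_a^a \geq s_h^a$ (pointwise dominance on $X_a$) yields $\delta_h^a - \delta_a^a \leq s_a^a - s_h^a$; the case $|X_a| < k$ is trivial because then the left-hand side is already nonpositive. Second, regrouping the cycle weight by bundle gives $-W_{G_X}(C) = \sum_{i=0}^{\ell-1}\bigl[v_{a_i}(X_{a_i}) - v_{a_{i-1}}(X_{a_i})\bigr]$ (indices modulo $\ell$); each summand is nonnegative by the bundle-dominance lemma applied pairwise, so dropping all but the $i=0$ term and applying the quantitative refinement yields $-W_{G_X}(C) \geq v_a(X_a) - v_h(X_a) \geq s_a^a - s_h^a$. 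Chaining the two inequalities gives $-W_{G_X}(C) \geq \delta_h^a - \delta_a^a$, contradicting positivity of $C$ in $G_{X'}$. The main delicate step I expect is the marginal case analysis, since the piecewise form of $\delta_i^a$ produces a handful of sub-cases (depending on whether $g$ enters each agent's top-$k$); but each sub-case reduces to a one-line comparison using the two invariants above, so the argument remains clean.
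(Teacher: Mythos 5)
Your proof is correct: the algorithm is the same singleton-greedy rule the paper uses, and each step of your cycle analysis checks out --- the sorted top-$k$ dominance, the refinement $v_i(S)-v_\ell(S)\geq s_i^S-s_\ell^S$, the case analysis giving $\delta_h^a-\delta_a^a\leq s_a^a-s_h^a$, and the regrouping of the cycle weight. But the route is far longer than necessary, and the comparison with the paper is instructive. The fact you establish along the way --- that pointwise dominance of $v_i$ over $v_\ell$ on $X_i$ implies the bundle dominance $v_i(X_i)\geq v_\ell(X_i)$ --- is already the entire proof: once every bundle is held by an agent who values it at least as much as anyone else does, no permutation of the fixed bundles can increase social welfare, so the allocation is locally efficient at every step, with no induction on arrivals and no need for the marginals $\delta_i^a$ or the thresholds $s_i^S$. (The paper gets bundle dominance even more cheaply: take the maximizing set $T^\star\subseteq X_i$ for $v_{i'}$, note $\sum_{j\in T^\star}v_{i'}(\{j\})\leq\sum_{j\in T^\star}v_i(\{j\})\leq v_i(X_i)$ since $|T^\star|\leq k$, bypassing the sorted-order argument.) Your own second inequality already contains this shortcut: the regrouping $-W_{G_X}(C)=\sum_i\,[\,v_{a_i}(X_{a_i})-v_{a_{i-1}}(X_{a_i})\,]$ with every summand nonnegative shows that \emph{every} cycle has nonpositive weight, and since the pointwise-dominance invariant (hence bundle dominance) also holds for the updated allocation $X'$, applying the same regrouping directly to $G_{X'}$ kills all positive cycles at once, rendering the identity $W_{G_{X'}}(C)=W_{G_X}(C)+\delta_h^a-\delta_a^a$ and the ensuing case analysis superfluous. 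What your extra machinery does buy is a quantitative handle on how $k$-demand marginals shift when an item is inserted into a full bundle, which could be relevant for the subsidy bounds later in the paper, but for bare local efficiency the direct argument is both shorter and more transparent.
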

\begin{proof}
Consider the following algorithm. Each time a good $g$ arrives, the algorithm allocates it to the agent $i_g$ with the highest singleton value for it, i.e., $i_g \in \arg\max_{i \in [n]} v_i(\{g\})$. Let the resulting allocation be $X = (X_1, \ldots, X_n)$. When the agents have $k$-demand valuations, at every step in the execution of the online algorithm, this allocation ensures that each bundle is assigned to the agent with the highest value for this bundle, i.e., $v_i(X_i) \ge v_{i'}(X_i)$ for all $i, i'\in [n]$. To see this, fix any pair of agents $i, i' \in [n]$. By the definition of $k$-demand valuations, $v_{i'}(X_i) = \max_{T \subseteq X_i: |T| \le k} \sum_{j \in T} v_{i'}(\{j\})$. 
Let $T^\star$ be a subset of $X_i$ that achieves this maximum. Then $v_{i'}(X_i) = \sum_{j \in T^\star} v_{i'}(\{j\})$, which by our algorithm is at most $\sum_{j \in T^\star} v_i(\{j\}) \le v_i(X_i)$, where the last inequality follows from the definition of $k$-demand valuations. Thus $v_i(X_i) \ge v_{i'}(X_i)$ for all $i, i' \in [n]$, so there is no reassignment of the $n$ fixed bundles across agents that increases the social welfare. Consequently, the algorithm maintains local efficiency.
\end{proof}

\subsection{LE Cannot Be Maintained Online}

In this section, we show that it is impossible to maintain local efficiency online for budget-additive valuations, even if the budgets are known in advance, and for submodular and supermodular valuations, even if the marginal values are binary (i.e., in $\{0,1\}$). The impossibilities for binary submodular and binary supermodular valuations are particularly surprising, given the large collection of positive results in fair division for these highly structured valuation classes \cite[e.g.][]{barman2020existence,babaioff2021fair,barman2023fair}.

\subsubsection{Budget-Additive Valuations} 

A valuation function $v_i$ is budget-additive if it is additive, but capped at an agent-specific budget $B_i$. That is, the value of a non-empty bundle $S$ is $v_i(S) = \min\{\sum_{j\in S}v_i(\{j\}),B_i\}$. Budget-additive valuations have been widely studied in auctions and fair division \cite[e.g.][]{mehta2007adwords,buchfuhrer2010inapproximability,garg2018approximating}.

\begin{theorem}\label{thm:budget_additive}
No online algorithm can maintain envy-freeability in instances with budget-additive valuations, even if the budgets are known in advance.
\end{theorem}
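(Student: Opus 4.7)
The plan is to construct an adaptive adversary that forces any online algorithm to violate local efficiency at some step. As a preliminary observation, for $n = 2$ agents and budget-additive valuations, online LE can in fact always be maintained: assuming both candidate assignments of the incoming item $g$ break LE and summing the two resulting failure inequalities yields
\[
[M_2(g,X_1) - M_2(g,X_2)] + [M_1(g,X_2) - M_1(g,X_1)] > 2s,
\]
where $s$ denotes the current LE slack of $X$ and $M_i(g,S)$ is agent $i$'s marginal value for $g$ given bundle $S$; but for budget-additive valuations each bracketed marginal-value difference is bounded above by the corresponding ``own-minus-cross'' contribution to $s$, so the left-hand side is at most $s$, a contradiction. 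Hence the impossibility genuinely requires $n \geq 3$ agents, where cyclic permutations of the bundles unlock LE-violating swaps that are unavailable with only a 2-swap.

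For $n \geq 3$ with publicly known budgets, I would use a two-phase adaptive adversary. In the \emph{setup phase}, the adversary presents items whose values are chosen so that LE at each subsequent step uniquely determines the algorithm's assignment. This pins the allocation into a carefully engineered state where (i) each bundle is almost saturated with respect to its owner, so further items have zero marginal for the owner; (ii) the cross-bundle values lie below the non-owner budget caps, leaving room for the trap item's contribution via another agent; and (iii) a particular permutation---typically a $3$-cycle---has welfare only slightly less than the identity permutation (i.e.\ LE is near-tight). In the \emph{trap phase}, the adversary presents a single final item whose value profile is tuned so that, for every possible assignment of the item to some agent, some permutation of the resulting bundles strictly increases welfare, violating LE.

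The main technical obstacle is calibrating the trap item. A single class of permutations---say, only $3$-cycles---cannot break LE across all $n$ candidate assignments, because summing the corresponding break inequalities yields a telescoping contradiction analogous to the $n = 2$ argument above. The construction therefore would combine different permutation classes (for example, a $3$-cycle breaking the assignment to agent $1$ while a $2$-swap breaks the assignment to agent $2$), or use larger setup bundles with multiple items per agent so that asymmetries in the budget caps can be exploited by the trap. Verifying failure in each case is then routine: the budget-additive welfare expressions are piecewise linear in the item values, so every LE check reduces to a constant number of linear inequalities per candidate assignment of the trap item.
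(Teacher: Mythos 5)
There is a genuine gap, and it begins with your ``preliminary observation.'' You claim that for $n=2$ agents online LE can always be maintained for budget-additive valuations, and that the impossibility therefore requires $n\ge 3$. This is false, and the paper's own proof is a two-agent, two-item counterexample. Concretely: let agent $1$ have budget $1-\epsilon$ and agent $2$ have budget $1$. Item $1$ has value $1-\epsilon$ for agent $1$ and $1-2\epsilon$ for agent $2$; LE forces it to agent $1$. Item $2$ has value $1-\epsilon$ for agent $1$ and $1/2$ for agent $2$. If item $2$ goes to agent $1$, the welfare is $1-\epsilon$ (capped), while swapping bundles gives agent $2$ the pair at value $\min(3/2-2\epsilon,1)=1$. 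If item $2$ goes to agent $2$, the welfare is $3/2-\epsilon$, while swapping gives $(1-\epsilon)+(1-2\epsilon)=2-3\epsilon$, an increase for $\epsilon<1/4$. So both assignments break LE with only a $2$-swap.

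Your summing argument fails precisely where budget-additivity bites: the step ``each bracketed marginal-value difference is bounded above by the corresponding own-minus-cross contribution to $s$'' is not valid, because for budget-additive valuations the marginal $M_i(g,S)$ depends on $S$ through the cap. In the example above, with $X_1=\{1\}$, $X_2=\emptyset$, the slack is $s=\epsilon$, but $M_2(g,X_1)-M_1(g,X_1)=2\epsilon-0=2\epsilon$, which already exceeds the contribution $v_1(X_1)-v_2(X_1)=\epsilon$, and $M_1(g,X_2)-M_2(g,X_2)=1/2-\epsilon$ is not small at all. The two break-inequalities are simultaneously satisfiable, and no telescoping contradiction arises. (Your bound would be fine for additive valuations, where marginals are bundle-independent; that is exactly Proposition~\ref{prop:local-eff}.)

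Beyond this, the rest of your submission is a plan rather than a proof: the ``trap phase'' is never instantiated with concrete values, and you explicitly leave ``calibrating the trap item'' as an unresolved obstacle. Since the false $n=2$ claim is what pushes you toward the elaborate $n\ge3$, multi-phase construction, the fix is to discard the preliminary observation and exhibit a direct two-agent instance as above.
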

\begin{proof}
Consider an instance with two agents where agent $1$'s budget is $1-\epsilon$, and agent $2$'s budget is $1$. Suppose an item arrives online with value $1-\epsilon$ for agent $1$ and value $1-2\epsilon$ for agent $2$. Any online algorithm that maintains LE assigns this item to agent $1$. Next, suppose a second item arrives online with value $1-\epsilon$ for agent $1$ and value $1/2$ for agent $2$. If this item is assigned to agent $1$, the resulting social welfare is $1-\epsilon$, but swapping the two bundles results in an increase in the social welfare to $1$. Similarly, if the second item is instead assigned to agent $2$, the resulting social welfare is $3/2 - \epsilon$, but swapping the two bundles results in a social welfare of $2-3\epsilon$, which is an increase for sufficiently small $\epsilon>0$.
\end{proof}

    \subsubsection{Binary Submodular Valuations} Next, we show that local efficiency cannot be maintained online for the class of submodular valuations. This result holds even in the special case where the marginal value of an item is either 0 or 1, i.e., submodular valuations with binary marginals, and therefore extends to matroid-rank valuations and gross-substitutes valuations \cite{Leme17}. It is well-known that the class of submodular valuations with binary marginals exactly coincides with the class of matroid-rank valuations.

A valuation function $v_i$ is matroid rank if there is an underlying finite matroid $\mathcal{M}_i = ([m], \mathcal{I}_i)$ such that the value of a bundle $S \subseteq [m]$ is equal to the rank of that bundle in $\mathcal{M}_i$, i.e., the size of the largest independent subset of $S$. In the following proof, we will use the fact that matroids are closed under restriction, i.e., if an online instance with a given profile of matroid-rank valuations terminates early, then the resulting partial instance is a feasible instance for the class of matroid-rank valuations.

\begin{theorem}\label{thm:binary_submodular}
No online algorithm can maintain envy-freeability in instances with binary submodular valuations.
\end{theorem}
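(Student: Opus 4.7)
The plan is to construct, for every deterministic online algorithm, an adversarial sequence of items with matroid-rank valuations (equivalently, binary submodular functions) such that at some step the algorithm has no way to preserve local efficiency. I will focus on the two-agent case, which already suffices for the impossibility and makes the LE-characterization especially clean.

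First, I would specialize Theorem~\ref{thm:ef-le} to $n=2$: an allocation $(X_1, X_2)$ is LE if and only if $g(X_1) \ge g(X_2)$, where $g(S) := v_1(S) - v_2(S)$. Writing $\Delta := g(X_1) - g(X_2)$, LE is exactly $\Delta \ge 0$. When a new item $a$ arrives, assigning $a$ to agent $i$ changes $g(X_i)$ by
\[
    \delta_i \;:=\; m_1(a \mid X_i) - m_2(a \mid X_i),
\]
where $m_k(a \mid S) = v_k(S \cup \{a\}) - v_k(S) \in \{0,1\}$ is the marginal in matroid $\mathcal{M}_k$. Thus $\delta_i \in \{-1, 0, 1\}$, and LE is preserved after $a$ is given to $1$ (resp.\ $2$) iff $\delta_1 \ge -\Delta$ (resp.\ $\delta_2 \le \Delta$).

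Next, I would isolate the trap configuration: if the state satisfies $\Delta = 0$ and the adversary can produce an item $a$ with $\delta_1 = -1$ and $\delta_2 = +1$, then assigning $a$ to either agent yields new $\Delta = -1 < 0$, so both options violate LE. In matroid-closure terms, the required $a$ must lie in $\mathrm{cl}_1(X_1)\cap \mathrm{cl}_2(X_2) \setminus \bigl(\mathrm{cl}_1(X_2)\cup \mathrm{cl}_2(X_1)\bigr)$. I would realize this with two partition matroids whose blocks differ between the agents: $a$ parallel in $\mathcal{M}_1$ to some element of $X_1$ but independent from $X_2$, and symmetrically in $\mathcal{M}_2$ with the roles of $X_1$ and $X_2$ swapped. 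A minimal witness uses only three items: for example, $\mathcal{M}_1$ with blocks $\{a_1,a_3\},\{a_2\}$ and $\mathcal{M}_2$ with blocks $\{a_2,a_3\},\{a_1\}$, which makes state $(X_1,X_2) = (\{a_1\},\{a_2\})$ a trap once $a_3$ is revealed.

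The main obstacle, and the technical core of the argument, is to force \emph{every} online algorithm into such a trap regardless of how it breaks ties, since the algorithm can try to ``absorb'' items into a single bundle to keep $\Delta = 0$ trivially by symmetry. To handle this I plan an adaptive adversary: the initial items arrive with symmetric values $v_1 = v_2 = 1$ so the algorithm is forced to tie-break, and the adversary then branches on the algorithm's decisions. On each branch I will either play the trap item immediately (in branches that reach a state $(X_1,X_2)$ with distinct bundles and $\Delta = 0$) or reveal additional items whose marginals are forced (so the algorithm has no choice), steering the state back into a configuration with $\Delta = 0$ and the required closure separation in both matroids, at which point the trap item is played. Verifying that this case analysis closes off every apparent ``safe'' tie-breaking path---in particular those where the algorithm piles items onto a single agent to achieve full matroid rank and thereby defeats closure separation---is the step I expect to require the most care; this is where the freedom afforded by partition matroids with more blocks (and, if needed, a slight enlargement of the instance) will be used to ensure that no algorithm can avoid the trap forever.
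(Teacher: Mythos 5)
Your reduction of local efficiency to $\Delta=g(X_1)-g(X_2)\ge 0$ and your ``trap'' condition (an item in $\mathrm{cl}_1(X_1)\cap \mathrm{cl}_2(X_2)$ but outside $\mathrm{cl}_1(X_2)\cup \mathrm{cl}_2(X_1)$, sprung at $\Delta=0$) are correct and are exactly the mechanism behind the paper's first case, where the two symmetric opening items end up on different agents. The problem is that you have explicitly deferred the only case that is actually hard: the branch where the algorithm puts both opening items on the same agent, say $X_1=\{a,b\}$, $X_2=\emptyset$. In that state your trap is \emph{provably unavailable}: a trap item would have to lie in $\mathrm{cl}_2(X_2)=\mathrm{cl}_2(\emptyset)$, i.e.\ be a loop of $\mathcal{M}_2$, and hence would also lie in $\mathrm{cl}_2(X_1)$, contradicting the separation you need. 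So ``steering the state back into a trap'' requires a genuinely different intermediate move --- the adversary must first \emph{force} an item onto agent $2$ --- and your proposal neither exhibits that move nor proves it is always available. As written, the argument does not establish the theorem.

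The gap is closable, and in two ways. The paper's route uses a rank asymmetry: it makes $\mathcal{M}_1$ the uniform rank-$2$ matroid on $\{a,b,c,d\}$ and gives $\mathcal{M}_2$ the bases $\{a,b,c\},\{a,b,d\}$; then giving $c$ to agent $1$ already violates LE (a swap raises welfare from $2$ to $3$), so $c$ is forced onto agent $2$, after which $d$ is precisely your trap item for the state $(\{a,b\},\{c\})$. Alternatively, staying within your partition-matroid vocabulary: let $c$ be parallel to $a$ in $\mathcal{M}_1$ but in a fresh block of $\mathcal{M}_2$ (so $\delta_1=-1$ forces $c$ to agent $2$ while $\delta_2=0$ keeps $\Delta=0$), and then let $d$ be parallel to $b$ in $\mathcal{M}_1$ and parallel to $c$ in $\mathcal{M}_2$, which satisfies all four closure conditions for the trap at $(\{a,b\},\{c\})$. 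Either completion uses four items in total, matching the paper. You should also record, as the paper does, that matroid-rank functions are closed under restriction, so the adversary may legitimately stop the instance at the step where LE first fails.
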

\begin{proof}
For a given online algorithm, we will adversarially construct an instance with matroid-rank valuations for which this algorithm cannot maintain local efficiency online. The instance has two agents, labeled 1 and 2, and up to four items, labeled $a, b, c$ and $d$, that arrive in order. Suppose both agents have a value of 1 for each of the items $a$ and $b$ and value 2 for the bundle $\{a,b\}$. Without loss of generality, suppose the algorithm assigns item $a$ to agent 1.

When item $b$ arrives, we have the following cases.

\underline{Case 1:}
\textit{The algorithm assigns item $b$ to agent $2$.}
Suppose item $c$ arrives next, and agent $1$'s underlying matroid has the bases $\{\{a,b\},\{b,c\}\}$, and agent $2$'s matroid has the bases $\{\{a,b\},\{a,c\}\}$ (this choice is consistent with the valuations of $a$ and $b$). If item $c$ is assigned to agent $1$, then the social welfare of the resulting allocation is 2. However, swapping the bundles increases the social welfare to 3. Similarly, if item $c$ is assigned to agent $2$, then swapping the bundles increases the social welfare from 2 to 3. Consequently, the online algorithm fails to maintain local efficiency.

 \underline{Case 2:}
 \textit{The algorithm assigns item $b$ to agent $1$.} We will now require item $d$. Suppose agent $1$'s underlying matroid has the bases $\{\{a,b\},\{a,c\},\{a,d\},\{b,c\},\{b,d\},\{c,d\}\}$, and agent $2$'s underlying matroid has the bases $\{\{a,b,c\},\{a,b,d\}\}$ (these choices are consistent with  valuations of $a$ and $b$). We have two cases, corresponding to the possible assignments of item $c$.

 \underline{Case 2a:}
 \textit{The algorithm assigns item $c$ to agent $1$.} If item $c$ is assigned to agent $1$, then the social welfare of the resulting allocation is 2. However, swapping the bundles increases the social welfare to 3, so the online algorithm fails to maintain local efficiency in the restricted instance.

\underline{Case 2b:}
\textit{The algorithm assigns item $c$ to agent $2$.} 
Suppose item $c$ is assigned to agent $2$. Then, when item $d$ appears, it is assigned to either agent $1$ or agent $2$. If $d$ is assigned to agent $1$, then the social welfare of the resulting allocation is 3. However, swapping the bundles increases the social welfare to 4, so the online algorithm fails to maintain local efficiency. For the other case, suppose $d$ is assigned to agent $2$, then the social welfare of the resulting allocation is again 3. However, swapping the bundles once again increases the social welfare to 4, so the online algorithm fails to maintain local efficiency in all cases.
\end{proof}

\subsubsection{Binary Supermodular Valuations.}

We present a similar result for binary supermodular valuations. However, the argument additionally requires the use of a fifth item $e$.

\begin{theorem}\label{thm:binary_supermodular}
No online algorithm can maintain envy-freeability in instances with binary supermodular valuations.
\end{theorem}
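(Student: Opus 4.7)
The plan is to extend the adversarial argument of Theorem \ref{thm:binary_submodular} to binary supermodular valuations. As in the submodular proof, I will consider two agents and reveal items $a, b, c, d, e$ one at a time, branching the case analysis on each of the algorithm's placements. The fifth item $e$ will play the role of compensating for the additional rigidity of binary supermodular valuations, discussed below.

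The first step is to identify a sufficiently expressive family of binary supermodular valuations. A convenient choice is $v(S) = \sum_{\ell} \mathbb{1}[T_\ell \subseteq S] + \sum_{g \in U} \mathbb{1}[g \in S]$, where the witness sets $T_\ell$ are pairwise disjoint and also disjoint from the singleton set $U$; pairwise disjointness is exactly what guarantees binary marginals, while each summand is manifestly supermodular. Using valuations of this form, I will have both agents initially share an ``AND'' term $\mathbb{1}[\{a,b\} \subseteq S]$, so that after $a$ is placed (WLOG with agent $1$) and $b$ arrives, we branch into Case 1 ($b \to 2$, giving $X_1 = \{a\}, X_2 = \{b\}$ with $\sw = 0$) and Case 2 ($b \to 1$, giving $X_1 = \{a,b\}, X_2 = \emptyset$ with $\sw = 1$), paralleling the submodular proof. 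In each case, the adversary extends both valuations by adding further disjoint witness-set terms involving $c, d, e$; for instance, an AND $\mathbb{1}[\{c,d\} \subseteq S]$ for agent $1$ and an AND $\mathbb{1}[\{d,e\} \subseteq S]$ (or a singleton reward on $e$) for agent $2$. By choosing these terms adaptively based on the algorithm's placements of $c$ and $d$, the adversary steers the allocation into a configuration where the anti-diagonal sum $v_1(X_2) + v_2(X_1)$ strictly exceeds the diagonal sum $v_1(X_1) + v_2(X_2)$, breaking local efficiency.

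The main obstacle I anticipate is precisely the structural rigidity of binary supermodular valuations. Unlike in the matroid-rank setting, singletons generically have value $0$, and the non-decreasing-marginal condition rigidly couples nearby values: for instance, if $v(\{i\}) = 0$ for all items and $v(\{i,j\}) = v(\{i,k\}) = 1$, then $v(\{j,k\}) \ge 1$ is forced, since otherwise the marginal of $j$ at $\{i,k\}$ would equal $v(\{i,j,k\}) - v(\{i,k\}) \ge 2$, violating binary marginals. This rigidity prevents us from directly translating the submodular proof's matroid constructions, and is why a fifth item is needed: the adversary requires an extra ``free'' item to build a second witness-set AND term that is triggered only after the algorithm has been pushed into a misaligned partial allocation. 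The remaining verification is routine: for each branch of the case tree we tabulate the four matrix entries $v_i(X_j)$, show that the swap strictly increases social welfare, and confirm that the committed values extend to a valid binary supermodular function on the full ground set $\{a,b,c,d,e\}$.
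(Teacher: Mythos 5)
There is a genuine gap, and it is fatal to the construction as you have set it up. Your family (per-agent pairwise-disjoint witness sets plus disjoint singletons) is a perfectly good source of binary supermodular functions, but the decision to give \emph{both} agents the shared witness set $\{a,b\}$ backfires in two compounding ways. First, that term never influences local efficiency: in your Case 2 the pair $a,b$ stays together in $X_1$ forever, so the term contributes $1$ to both $v_1(X_1)$ and $v_2(X_1)$ and $0$ to both values of $X_2$, and hence cancels identically between the diagonal sum $v_1(X_1)+v_2(X_2)$ and the anti-diagonal sum $v_1(X_2)+v_2(X_1)$; in Case 1 the pair is split forever and the term contributes $0$ to all four entries. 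Second, and more damagingly, since each agent's witness sets must be pairwise disjoint, once $\{a,b\}$ is a witness set for an agent, no later witness set of that agent may touch $a$ or $b$. So in \emph{both} cases every value-relevant structure you can still introduce is supported on $\{c,d,e\}$, and the problem reduces to a fresh two-agent online game on three items with empty starting bundles and valuations from your restricted family. That reduced game is survivable by the algorithm: giving $c$ to a highest valuer and then co-locating $d$ with $c$ whenever LE permits, one checks (writing $\alpha_i=u_i(\{c,d\})$, $\beta_i=u_i(\{e\})$) that killing both placements of $e$ from the state $(\{c,d\},\emptyset)$ would require $\beta_1=1$, $\beta_2=0$ and yet $u_1(\{c,d,e\})<u_2(\{c,d,e\})$, which is impossible because the only non-additive correction is at most one witness set containing $e$ per agent; and the state $(\{c\},\{d\})$ can only be \emph{forced} by commitments incompatible with the witness sets $\{d,e\}$ (agent 1) and $\{c,e\}$ (agent 2) that would be needed to kill both placements of $e$ there. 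Your adversary simply runs out of items.

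The repair is exactly the opposite initial commitment: set $v_1(\{a,b\})=v_2(\{a,b\})=0$, i.e.\ spend no witness set on $\{a,b\}$. Then in Case 1 the adversary can give agent 1 the witness set $\{b,c\}$ and agent 2 the witness set $\{a,c\}$ --- sets that \emph{straddle} the existing bundles $\{a\}$ and $\{b\}$ --- and either placement of $c$ already breaks local efficiency (swap welfare $1$ versus welfare $0$); Case 2 similarly needs witness sets containing $a$ and $b$, such as $\{a,b,d\}$ for agent 2. These straddling sets are precisely what your disjointness constraint, once triggered by the shared $\{a,b\}$ term, forbids. Your diagnosis of the rigidity of binary supermodular functions and the role of the fifth item is directionally right, but the fifth item cannot compensate for locking $a$ and $b$ out of all future witness sets.
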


\begin{proof}
For a given online algorithm, we will adversarially construct an instance with binary supermodular valuations for which this algorithm cannot maintain local efficiency online. The instance has two agents, labeled 1 and 2, and up to five items, labeled $a, b, c$, $d$ and $e$, that arrive in order. Suppose both agents have a value of 0 for each of the items $a$ and $b$ and for the bundle $\{a,b\}$. Without loss of generality, suppose the algorithm assigns item $a$ to agent 1.

When item $b$ arrives, we have the following cases.

\underline{Case 1:} 
\textit{The algorithm assigns item $b$ to agent $2$.} 
Suppose item $c$ arrives next, agent $1$'s valuation function assigns a value of 1 for the sets $\{\{b,c\},\{a,b,c\}\}$ and a value of 0 for the other sets, and agent $2$'s valuation function assigns a value of 1 for the sets $\{\{a,c\},\{a,b,c\}\}$ and a value of 0 for the other sets. It is easy to verify that all marginal values are either 0 or 1, and that these functions are supermodular. If item $c$ is assigned to agent $1$, the resulting allocation has a social welfare of 0; however, swapping the bundles increases the welfare to 1. Similarly, if $c$ is assigned to agent $2$, swapping the bundles increases the welfare from 0 to 1.

\underline{Case 2:} 
\textit{The algorithm assigns item $b$ to agent $1$.} 
We will now require items $d$ and $e$ to complete the argument. Suppose item $c$ arrives and all subsets of items have a value of 0 for both agents. We have two cases, corresponding to the possible assignments of item $c$.

\underline{Case 2a:}
\textit{The algorithm assigns item $c$ to agent $2$.} 
Suppose the algorithm assigns $c$ to agent 2. When item $d$ arrives, suppose agent 1 has a value of 1 for the sets $\{\{c,d\},\{a,c,d\},\{b,c,d\},\{a,b,c,d\}\}$, and agent 2 has a value of 1 for the sets $\{\{a,b,d\},\{a,b,c,d\}\}$. It can be verified that the resulting functions are binary supermodular. If $d$ is assigned to agent 1, swapping the bundles increases the social welfare from 0 to 1. Similarly, if $d$ is assigned to agent 2, swapping the bundles increases the social welfare from 0 to 1. In either case, the online algorithm fails to maintain local efficiency.

\underline{Case 2b:}
\textit{The algorithm assigns item $c$ to agent $1$.} 
Suppose item $c$ is assigned to agent $1$. Then, when item $d$ appears, suppose agent 1 has a value of 0 for all subsets, and agent 2 has a value of 1 for only the set $\{a,b,c,d\}$ and a value of 0 otherwise. If item $d$ is assigned to agent 1, a bundle swap increases the social welfare from 0 to 1, so $d$ is assigned to agent 2. Then, suppose item $e$ arrives, and suppose agent 1 has a value of 1 for the set $\{d,e\}$ and all of its supersets (and a value of 0 otherwise), and agent 2 has a value of 1 for the sets $\{\{a,b,c,d\},\{a,b,c,e\}\}$ and a value of 2 for the set $\{a,b,c,d,e\}$. It can be verified that the resulting valuations are binary supermodular. Now, if $e$ is assigned to agent 1, swapping the bundles increases the welfare from 0 to 1. If instead $e$ is assigned to agent 2, swapping the bundles again increases the welfare from 0 to 1. The online algorithm again fails to maintain local efficiency in all cases, completing the argument.
\end{proof}

We remark that it is much easier to construct small counterexamples for the classes of submodular and supermodular valuations if the assumption of binary marginal values is dropped.

\section{(Mostly) Tight Subsidy Bounds}

In the previous section, our goal was to delineate the valuation classes for which it is possible to maintain local efficiency online. For this section, our objective is to maintain local efficiency while additionally bounding the total subsidy sufficient for envy-freeness in the resulting allocation. We investigate a variety of valuation classes in order to determine tight upper bounds and lower bounds on the subsidy for these classes.

As a consequence of Lemma~\ref{lem:additive-paths} the upper bound of $m(n-1)$ holds for all valuation functions that we consider in this section. First, in Section~\ref{sec:high-bounds}, we show that for additive valuations (and therefore its superclasses, including SPLC valuations and $k$-demand valuations for unrestricted $k$), this is the best subsidy bound that any online algorithm can achieve. Next, in Section~\ref{sec:low-bounds} we investigate structured classes of valuations and show that this subsidy bound improves. In particular, we show that a {\em small} subsidy, independent of the number of items $m$, is possible for these classes. Most of our subsidy bounds in this section are exactly tight even up to coefficients.

\subsection{Large Subsidy Bounds}\label{sec:high-bounds}

In this section, we consider the case of additive and, more generally, SPLC valuations, as discussed in Section~\ref{sec:splc}.

\begin{theorem}\label{thm:additive-lb}
There exists an online algorithm that achieves envy-freeness with a total subsidy of at most $m(n-1)$ for all instances with SPLC valuations, which in particular include all additive valuation instances. Moreover, for any $\epsilon > 0$ and any online algorithm, there exists an additive instance for which achieving envy-freeness requires a total subsidy of at least $m(n-1) - \epsilon$.
\end{theorem}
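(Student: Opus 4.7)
For the upper bound, the plan is to invoke Theorem~\ref{thm:splc}: the algorithm that assigns each arriving item to an agent with the highest marginal value maintains local efficiency online for SPLC valuations, and in particular for the additive subclass. Since any LE allocation admits a minimum envy-eliminating subsidy of at most $m(n-1)$ by Lemma~\ref{lem:additive-paths}, this immediately yields the upper bound.

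For the matching lower bound, I would construct an adversarial additive instance that forces any LE-maintaining online algorithm to assign every item to a single designated agent, say agent $n$. Given $\epsilon > 0$ and a number of items $m$, set $\delta := \epsilon / ((n-1)(2^m - 1))$ and $\epsilon_j := 2^{j-1} \delta$ for $j \in [m]$. Item $j$ has values $v_n(j) = 1$ and $v_i(j) = 1 - \epsilon_j$ for all $i < n$. The feasibility condition $\epsilon_j \leq 1$ holds whenever $\epsilon \leq 2(n-1)$, which is without loss of generality, since for larger $\epsilon$ the construction applied with any $\epsilon' \leq 2(n-1)$ yields a subsidy of at least $m(n-1) - \epsilon' \geq m(n-1) - \epsilon$. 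The core inductive claim is that after items $1, \ldots, j$ have all been forced into $X_n$, item $j+1$ must also be placed in $X_n$; otherwise, if some agent $k < n$ were assigned item $j+1$, then swapping the bundles $X_k = \{j+1\}$ and $X_n = \{1, \ldots, j\}$ would increase social welfare by
\[
v_n(\{j+1\}) + v_k(X_n) - v_k(\{j+1\}) - v_n(X_n) = \epsilon_{j+1} - \sum_{i=1}^{j} \epsilon_i = 2^j \delta - (2^j - 1)\delta = \delta > 0,
\]
contradicting local efficiency and hence envy-freeability by Theorem~\ref{thm:ef-le}.

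Once the forcing claim is established, the final allocation has $X_n = [m]$ and $X_i = \emptyset$ for $i < n$, so the positive-weight edges in the envy graph are precisely the edges $i \to n$ for $i < n$, each with weight $v_i(X_n) = m - \sum_j \epsilon_j = m - \epsilon/(n-1)$. By Lemma~\ref{lem:additive-paths}, the minimum envy-eliminating subsidy equals $\sum_i \ell(i) = (n-1)\bigl(m - \epsilon/(n-1)\bigr) = m(n-1) - \epsilon$, as desired. The main subtlety is verifying the forcing against \emph{all} permutations, not just the pairwise swap $k \leftrightarrow n$ above; however, since agents other than $k$ and $n$ hold empty bundles, any permutation either only cycles empty bundles (contributing zero to the welfare change) or routes $X_k$ or $X_n$ through an empty bundle. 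A brief case analysis shows that each such permutation reduces to, or is dominated by, the pairwise swap computed above, so the single pairwise condition is sufficient.
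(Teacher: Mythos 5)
Your proposal is correct and follows essentially the same route as the paper: the upper bound via Lemma~\ref{lem:additive-paths}, and a lower bound built from geometrically growing value gaps that force every item onto a single agent via a welfare-increasing pairwise swap of size $\delta$ (your construction is an affine reparametrization of the paper's). The only quibbles are cosmetic: the feasibility threshold should be $\epsilon \leq n-1$ rather than $2(n-1)$ (irrelevant given your reduction to smaller $\epsilon'$), and the closing worry about ``all permutations'' is unnecessary, since exhibiting the single welfare-increasing swap already certifies that the alternative allocation is not locally efficient.
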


\begin{proof} 
\begin{table}[t]
    \centering
    \begin{tabular}{|c|| c | c | c | c | c |}
         \hhline{|-||-----|}
             & 
        \textbf{Item} $\bm{1}$ & \textbf{Item} $\bm{2}$ & $\bm{\cdots}$ & \textbf{Item} $\bm{m-1}$ & \textbf{Item} $\bm{m}$\\
         \hhline{|-||-----|}
         \noalign{\vskip 2pt}
        \hhline{|-||-----|}
        Agent 1 & \cellcolor{orange!20}$1-\Bar{\epsilon}+2\delta$  & \cellcolor{orange!20}$1-\Bar{\epsilon}+4\delta$ & \cellcolor{orange!20}$\cdots$ & \cellcolor{orange!20}$1-\Bar{\epsilon}+{2^{m-1}}\delta$ & \cellcolor{orange!20}$1-\Bar{\epsilon}+{2^{m}}\delta$\\
                 \hhline{|-||-----|}  \noalign{\vskip 2pt}
        \hhline{|-||-----|}
        Agent 2 & \cellcolor{gray!20}$1-\Bar{\epsilon}+\delta$ & \cellcolor{gray!20}$1-\Bar{\epsilon}+2\delta$ & \cellcolor{gray!20}$\cdots$ & \cellcolor{gray!20}$1-\Bar{\epsilon}+{2^{m-2}}\delta$ & \cellcolor{gray!20}$1-\Bar{\epsilon}+{2^{m-1}}\delta$\\
                 \hhline{|-||-----|}
         Agent 3 & \cellcolor{gray!20}$1-\Bar{\epsilon}+\delta$ & \cellcolor{gray!20}$1-\Bar{\epsilon}+2\delta$ & \cellcolor{gray!20}$\cdots$ & \cellcolor{gray!20}$1-\Bar{\epsilon}+{2^{m-2}}\delta$ & \cellcolor{gray!20}$1-\Bar{\epsilon}+{2^{m-1}}\delta$\\
                  \hhline{|-||-----|}
$\raisebox{0.75ex}{$\vdots$}$ & \cellcolor{gray!20}$\raisebox{0.75ex}{$\vdots$}$ & \cellcolor{gray!20}$\raisebox{0.75ex}{$\vdots$}$ & \cellcolor{gray!20}$\raisebox{0.75ex}{$\ddots$}$ & \cellcolor{gray!20}$\raisebox{0.75ex}{$\vdots$}$ & \cellcolor{gray!20}$\raisebox{0.75ex}{$\vdots$}$\\
                  \hhline{|-||-----|}
         Agent $n$ &\cellcolor{gray!20} $1-\Bar{\epsilon}+\delta$ & \cellcolor{gray!20} $1-\Bar{\epsilon}+2\delta$ & \cellcolor{gray!20}$\cdots$ &\cellcolor{gray!20} $1-\Bar{\epsilon}+{2^{m-2}}\delta$ & \cellcolor{gray!20}$1-\Bar{\epsilon}+{2^{m-1}}\delta$\\
         \hhline{|-||-----|}
    \end{tabular}
    \caption{Valuation function showing the lower bound for general additive valuation functions. 
   The value in the $i^{th}$ row and $j^{th}$ column is the value of agent $i$ for good $j$.
The orange agent-item pairs correspond to the unique allocation maintaining envy-freeness at every step, and the sum of gray entries is the required subsidy to make it envy-free.
    }
    \label{tab:add-lower-bound-val}
\end{table}

\emph{Upper bound.} This part follows directly from Lemma~\ref{lem:additive-paths}.

\emph{Lower bound.} 
Consider an instance with $n$ agents and $m$ goods. The valuations of the agents are as shown in Table~\ref{tab:add-lower-bound-val}. The rows correspond to the agents and the columns correspond to goods. The goods arrive in order $1$ to $m$. The value of the $j^{th}$ good for agent $1$ is $1-\Bar{\epsilon} + 2^{j}\delta$, and for each agent $2 \leq i \leq n$ it is $1-\Bar{\epsilon} + 2^{j-1}\delta$. We choose $\delta$ and $\Bar{\epsilon}$ such that $\Bar{\epsilon} = 2^{m}\delta$, so that the maximum value of an item is 1. Note that each incoming good has a higher value for all agents than any previous good, and its value is always largest for agent $1$.

We claim that any allocation that maintains local efficiency allocates all goods to agent $1$. For the base case, the first good is allocated to agent $1$, since this agent has the highest value for it. Now suppose that the first $j$ goods have all been allocated to agent $1$. When the $(j+1)^{\text{st}}$ good arrives, suppose we assign this good to some other agent, say agent $2$. Then agent $1$ receives a total value of $j(1-\Bar{\epsilon}) + \sum_{r \in [j]}2^{r}\delta = j(1-\Bar{\epsilon}) + (2^{j+1} - 2)\delta$, while agent $2$ receives a value of $(1-\Bar{\epsilon}) + 2^{j}\delta$, and all other agents receive no items. The total value of this allocation is therefore $(j+1)(1-\Bar{\epsilon}) + (2^{j+1} + 2^{j} - 2)\delta$.

If we swap the bundles between agent $1$ and agent $2$, agent $1$ receives $(1-\Bar{\epsilon}) + 2^{j+1}\delta$ and agent $2$ receives $j(1-\Bar{\epsilon}) + (2^{j} - 1)\delta$, giving a total value of $(j+1)(1-\Bar{\epsilon}) + (2^{j+1} + 2^{j} - 1)\delta$, which is an increase of $\delta$ over the total value before this swap. Consequently, the previous allocation is not locally efficient, and good $j+1$ is assigned to agent $1$ in order to maintain local efficiency.

Thus, on termination, each agent $i \in \{2,\ldots,n\}$ receives at least $(n-1)m(1-\Bar{\epsilon})$ subsidy, which yields the desired bound by setting $\Bar{\epsilon} = \epsilon / (n(m-1))$.
\end{proof}

\subsection{Small Subsidy Bounds}\label{sec:low-bounds}
In this section, we consider five well-studied valuation classes for which we establish small subsidy bounds. These upper and lower bounds are exactly tight, except for the case of $k$-valued valuations.

\subsubsection{$k$-Demand Valuations}
We begin by analyzing the case of $k$-demand valuations, as discussed in Section~\ref{sec:kdemand}.

\begin{theorem}\label{thm:k-demand-subsidy}
There exists an online algorithm that achieves envy-freeness with a total subsidy of at most $k(n-1)$ for all instances with $k$-demand valuations. Moreover, for any $\epsilon > 0$ and any online algorithm, there exists a $k$-demand instance for which achieving envy-freeness requires a total subsidy of at least $k(n-1) - \epsilon$.
\end{theorem}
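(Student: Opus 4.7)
The plan is to prove the upper bound by analyzing the algorithm from Theorem~\ref{thm:le-k-demand} (which assigns each arriving good to an agent with the highest singleton value for it) and to prove the lower bound by embedding the additive adversarial construction of Theorem~\ref{thm:additive-lb} with only $k$ items.

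For the upper bound, I will avoid a heaviest-path calculation and instead exhibit a concrete envy-eliminating subsidy vector whose total cost is easy to bound. The crucial ingredient is the \emph{ownership property} already established in the proof of Theorem~\ref{thm:le-k-demand}: the resulting allocation $X$ satisfies $v_i(X_i) \geq v_j(X_i)$ for all $i, j \in [n]$. Let $M := \max_{i \in [n]} v_i(X_i)$ and define $p_i := M - v_i(X_i) \geq 0$. For any $i, j \in [n]$, we have $v_i(X_i) + p_i = M$ and $v_i(X_j) + p_j \leq v_j(X_j) + p_j = M$, where the inequality uses the ownership property; hence $p$ eliminates all envy. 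The total subsidy is $\sum_i p_i = nM - \sw(X) \leq (n-1)M$, since $M$ itself is one term of the sum $\sw(X)$. Because each singleton value is at most $1$ and a $k$-demand valuation is bounded by the sum of the top-$k$ singleton values, $M \leq k$, so the total subsidy is at most $k(n-1)$.

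For the lower bound, I plan to instantiate the adversarial construction from Theorem~\ref{thm:additive-lb} with exactly $m = k$ items and view the resulting additive valuations as $k$-demand valuations. The key observation is that a $k$-demand valuation coincides with its underlying additive singleton valuation on any bundle of size at most $k$, since the top-$k$ subset is then the whole bundle. In the execution analyzed in Theorem~\ref{thm:additive-lb}, all items are forced to a single agent, so every bundle appearing in any LE comparison has size at most $m = k$. Therefore every local-efficiency check in that proof -- which reasons only about such bundles -- remains valid in the $k$-demand setting, and the argument goes through verbatim, yielding a total required subsidy of at least $m(n-1) - \epsilon = k(n-1) - \epsilon$.

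The main obstacle I anticipate is ensuring that the ownership property really carries the weight of the upper-bound argument, so that nothing beyond Theorem~\ref{thm:le-k-demand} is needed; once that is in hand, the explicit subsidy vector $p_i = M - v_i(X_i)$ makes the proof essentially immediate. For the lower bound, the only delicate point is confirming that every bundle appearing in the LE analysis stays within size $k$, which follows from the fact that only $k$ items are introduced in total, so the $k$-demand and additive valuations agree throughout the execution.
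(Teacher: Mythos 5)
Your proof is correct. The lower bound is exactly the paper's argument: restrict the additive construction of Theorem~\ref{thm:additive-lb} to $m=k$ items, where $k$-demand and additive valuations coincide on every possible bundle. For the upper bound, you use the same algorithm and the same key fact --- the ownership property $v_i(X_i)\geq v_j(X_i)$ from Theorem~\ref{thm:le-k-demand} --- but you derive the subsidy bound differently. The paper routes through Lemma~\ref{lem:additive-paths}: it telescopes the ownership property along any path in the envy graph to show that every path weight is at most $\max_j v_j(X_j)\leq k$, so the sum of heaviest paths (which equals the minimum subsidy for this allocation) is at most $k(n-1)$. You instead exhibit the explicit vector $p_i=M-v_i(X_i)$ with $M=\max_j v_j(X_j)$, verify envy-freeness directly from the ownership property, and bound $\sum_i p_i = nM-\sw(X)\leq (n-1)M\leq k(n-1)$. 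Your version is more self-contained (it does not invoke the heaviest-path characterization at all) and gives a clean closed-form certificate; the paper's version computes the \emph{minimum} envy-eliminating subsidy for the chosen allocation, which your vector only upper-bounds, but since the theorem only asks for an upper bound on the total subsidy, both arguments are equally valid. Note also that your $p_i$ dominates the paper's $\ell(i)$ term by term, since the telescoping argument shows $\ell(i)\leq M-v_i(X_i)$.
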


\begin{proof}
    \textit{Upper Bound.} We reuse the algorithm from the proof of Theorem~\ref{thm:le-k-demand}, i.e., we allocate each incoming good to an agent that has the largest singleton value for it. Let $X$ be the resulting partial allocation at any step of this algorithm's execution. To see the subsidy bound, we first fix an agent $i \in [n]$ and consider any path in the envy graph starting at $i$. Denote the agents on this path by $i = i_1, \ldots, i_r = j$. Then the weight on the arc $\{i_\ell,i_{\ell+1}\}$ is $w_{G_X}(i_\ell,i_{\ell+1})$. Since any agent's value for its own bundle is at least as much as its value for any other agent's bundle, we get $v_j(X_j) \geq v_{j-1}(X_{j-1}) + w(j-1,j) \geq v_{j-2}(X_{j-2}) + w(j-2,j-1)+w(j-1,j) \geq \cdots \geq v_i(X_i) + \sum_{l \in [r-1]}w(l,l+1) \geq \sum_{\ell \in [r]}w_{G_X}(i_\ell,i_{\ell+1})$. By the definition of $k$-demand valuations, we know $v_i(X_i) \leq k$. Therefore $\sum_{\ell \in [r-1]}w_{G_X}(i_\ell,i_{\ell+1}) \leq k$, and by Lemma~\ref{lem:additive-paths} the total subsidy is bounded by the sum of heaviest paths starting at any node, which is at most $k(n-1)$.

    \textit{Lower Bound.} By Theorem~\ref{thm:additive-lb}, any online algorithm for additive instances with $n$ agents and $k$ items requires a subsidy of at least $k(n-1) - \epsilon$ for any $\epsilon > 0$. Since $k$-demand valuations coincide with additive valuations when $k=m$, the same lower bound applies to $k$-demand valuations.
\end{proof}

When $k$ does not depend on $m$, i.e., for small values of $k$, this upper bound on the subsidy is small. Notably, this allocation rule does not necessarily maximize welfare for this class of valuations, yet as we show, it preserves local efficiency.

\subsubsection{Personalized $k$-Valued Valuations} 
A valuation function $v_i$ is \emph{$k$-valued} if there exist $a_1, \ldots, a_k$ such that $v_i$ is additive and $v_i(\{g\}) \in \{a_1, \ldots, a_k\}$ \cite[e.g.][]{GargM23}. An important special case is the \emph{bivalued} case, corresponding to $k = 2$ \cite[e.g.][]{AmanatidisBFHV21}. Our results apply to the setting of \textit{personalized} $k$-valued valuations, where the set of values $\{a_1, \ldots, a_k\}$ may differ across agents.

\begin{theorem}\label{thm:k-valued}
There exists an online algorithm that achieves envy-freeness with a total subsidy of at most $n^2k^n$ for all instances with personalized $k$-valued valuations.
\end{theorem}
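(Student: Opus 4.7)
The plan is to exploit the discrete structure of personalized $k$-valued valuations. Every item $g$ has a value vector $t(g) = (v_1(g), \ldots, v_n(g)) \in \prod_{i \in [n]} \{a_{i,1}, \ldots, a_{i,k}\}$, so there are at most $k^n$ distinct item types. Items sharing a value vector are interchangeable from every agent's perspective, and the algorithm's state can be summarized by the counts $n_{i,t}$ of type-$t$ items held by each agent $i$. I would design the online algorithm as follows: when an item of type $t$ arrives, identify all agents whose receipt of the item preserves local efficiency, and among those, pick an agent using a balancing rule that prevents any per-type count from drifting far ahead of the others. Pure greedy welfare maximization is not viable here, since Theorem~\ref{thm:additive-lb} already shows it can force a subsidy as large as $m(n-1)$; the algorithm must instead redirect items to lower-value agents whenever LE still permits it.

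Next I would bound the envy between any pair of agents $(i,j)$. Because $v_i$ is additive, the envy decomposes across types as $v_i(X_j) - v_i(X_i) = \sum_t v_i(t)(n_{j,t} - n_{i,t})$. The balancing rule is designed so that $|n_{i,t} - n_{j,t}|$ stays within a bounded window compatible with LE for each pair $(i,j)$ and each type $t$. Since each type contributes at most $v_i(t) \leq 1$ to the envy and there are at most $k^n$ distinct types, the envy of any agent for any other is at most $k^n$. Feasibility of the algorithm at every step is guaranteed by Proposition~\ref{prop:local-eff}: the greedy welfare-maximizing choice is always LE-preserving, so the candidate set from which the balancing rule selects is always nonempty.

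Finally I would invoke Lemma~\ref{lem:additive-paths}. Since each edge of the envy graph has weight at most $k^n$ and any simple path contains at most $n-1$ edges, the heaviest path $\ell(i)$ from each agent has weight at most $(n-1)k^n$; summing over all $n$ agents yields a total subsidy of at most $n(n-1)k^n \leq n^2 k^n$, as required. The hard part will be constructing the tie-breaking rule and proving that it can always maintain LE (a joint constraint across all types and all pairs) while simultaneously bounding per-type imbalance. Because LE can force certain types to be distributed unequally in favor of higher-value agents, there is tension between local (per-type) balance and global (LE) feasibility; showing that the algorithm can absorb the LE-required imbalance within a window independent of $m$ is the heart of the technical work.
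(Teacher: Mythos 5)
Your skeleton matches the paper's: classify items into at most $k^n$ types by their value vectors, bound the envy between any two agents by $1$ per type (hence by $k^n$ overall), and conclude via Lemma~\ref{lem:additive-paths} that the heaviest path has weight at most $(n-1)k^n$, giving total subsidy $n^2k^n$. But there is a genuine gap exactly where you flag it: you never exhibit the allocation rule, and you leave open whether any rule can simultaneously keep every per-type count gap $|n_{i,t}-n_{j,t}|$ bounded and preserve local efficiency. Without that, the claimed per-type envy contribution of $v_i(t)\cdot(n_{j,t}-n_{i,t})\le 1$ is unsupported and the subsidy bound does not follow. Your appeal to Proposition~\ref{prop:local-eff} only shows the LE-preserving candidate set is nonempty; it says nothing about whether a \emph{balanced} choice lies in it. (Your side remark that Theorem~\ref{thm:additive-lb} rules out greedy is also not quite right as stated: that lower-bound instance uses $m$ distinct values per agent, so it is not $k$-valued for bounded $k$; the legitimate worry is simply that greedy can concentrate all copies of a type on one agent, creating envy that grows with $m$.)

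The missing piece has a simple resolution, and the tension you anticipate between per-type balance and global LE does not actually arise. The paper's rule allocates the items of each type in round-robin fashion, with the rotation ordered by non-increasing agent value for that type (first copy to the highest-valuing agent, second to the next, and so on). This keeps all per-type counts within $1$ of each other, with higher-valuing agents weakly ahead, so by the rearrangement inequality each per-type sub-allocation is locally efficient. Since the valuations are additive, the welfare change of any permutation of the bundles decomposes as a sum over types, so a positive-weight cycle in the global envy graph would force some single type's sub-allocation to violate LE; hence per-type LE implies global LE and no cross-type coordination is needed. With this rule supplied, your envy and path-weight accounting goes through essentially verbatim.
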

\begin{proof}
 In personalized $k$-valued valuations, each good can take at most $k$ different values for each agent. Therefore, if we represent each good as an $n$-dimensional vector where the $i^{th}$ coordinate denotes the value that agent $i$ assigns to the good, the set of all possible goods can be represented by at most $k^n$ distinct vectors. We refer to each of these vectors as a \emph{type}.
 
The algorithm proceeds as follows. For every good, we map it to its corresponding type. The first good of a given type is allocated to the agent who values it the most, the second to the agent who values it second-highest, and so on, in a round-robin fashion, breaking ties arbitrarily.

We first show that the resulting allocation is locally efficient. Suppose, for a contradiction, that the envy graph has a positive-weight directed cycle (corresponding to a violation of local efficiency). Then, since the allocation can be additively decomposed into $k^n$ separate allocations, one corresponding to each type, there exists an item type such that the allocation of the items of that type violates local efficiency. However, within each type, the items are assigned in non-increasing order of agent values, thus any agent that has a greater value for some item type receives a (weakly) larger number of items of that type, so by the rearrangement inequality~\cite{wikipedia:rearrangement_inequality} the allocation within each type is locally efficient.
 
It remains to bound the subsidy for the above algorithm. For any given type, the envy between any two agents is bounded by the value of a single good. Consequently, the total envy between any two agents across all types is bounded by the total number of types, i.e., $k^n$. Thus the weight of the heaviest path starting at any vertex is at most $(n-1)k^n$, and by Lemma~\ref{lem:additive-paths}, the total subsidy is bounded by $n^2k^n$.
\end{proof}

\subsubsection{Rank-One Valuations}
An instance with valuation functions $v_1, \ldots, v_n$ is \emph{rank-one} if $v_i(X) = \sum_{j \in X} w_i \cdot q_j$ for some agent weights $w_1, \ldots, w_n \in [0,1]$ and base values $q_1, \ldots, q_m \in [0,1]$.
An important motivation for rank-one valuations arises in the sponsored search model, where agents correspond to advertisers and items to ad slots. Here, an advertiser’s value for a slot is the product of their value per click and the slot’s click-through rate \cite{LahaiePSV07}.

\begin{theorem}\label{thm:single_param}
There exists an online algorithm that achieves envy-freeness with a total subsidy of at most $n(n+1)/2-1$ for all instances with rank-one valuations. Moreover, for any $\epsilon > 0$ and any online algorithm, there exists a rank-one instance for which achieving envy-freeness requires a total subsidy of at least $n(n+1)/2-1-\epsilon$.
\end{theorem}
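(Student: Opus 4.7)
The plan is to prove the two bounds separately, using the fact that for rank-one valuations an allocation $X$ is locally efficient if and only if the bundle totals $Q_i := \sum_{j \in X_i} q_j$ are sorted in the same order as the weights $w_i$, which follows from the rearrangement inequality applied to $\sw(X) = \sum_i w_i Q_i$. Assuming without loss of generality that $w_1 \geq w_2 \geq \cdots \geq w_n$, local efficiency reduces to $Q_1 \geq Q_2 \geq \cdots \geq Q_n$.

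For the upper bound, the algorithm I will analyze assigns each arriving item of base value $q$ to some agent $i^* \in \{2, \ldots, n\}$ satisfying $Q_{i^*-1} - Q_{i^*} \geq q$ if any such agent exists, and to agent 1 otherwise. This rule preserves the sorted order of the $Q_i$'s and hence LE. I will establish by induction the invariant $Q_1 - Q_i \leq i$ for every $i \geq 2$. When the item goes to some $i^* \geq 2$, only $Q_1 - Q_{i^*}$ changes, and it decreases. When no $i^* \geq 2$ is valid and the item goes to agent 1, every adjacent gap satisfies $Q_{j-1} - Q_j < q$, so $Q_1 - Q_i = \sum_{j=2}^i (Q_{j-1} - Q_j) < (i-1)q$, and after updating $Q_1 \mapsto Q_1 + q$ we get $Q_1 - Q_i < iq \leq i$. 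Combining this with Lemma~\ref{lem:additive-paths}---whose argument shows the heaviest envy path from $i$ has weight $\ell(i) \leq Q_1 - Q_i$ because each $w_{v_\ell} \leq 1$ makes the path weight telescope---yields total subsidy at most $\sum_{i=2}^n (Q_1 - Q_i) \leq \sum_{i=2}^n i = n(n+1)/2 - 1$.

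For the lower bound, I will construct an adversarial instance with weights $w_i = 1 - (i-1)\delta$ for a small $\delta > 0$ and present items in $n-1$ rounds adapted to the algorithm's choices. Round 1 sends two items with values close to 1 in strictly increasing order, both uniquely forced onto agent 1 by LE (placing either on any other agent would create a $Q$-inversion). For $k = 2, \ldots, n-1$, round $k$ contains $k - 1$ push items followed by one force item: each push item has value chosen so that exactly one agent $i^* \in \{2, 3, \ldots, k\}$ has a large enough gap $Q_{i^*-1} - Q_{i^*}$ to absorb it under LE, and each force item has value strictly exceeding every current gap. By induction on $k$, if the algorithm assigns each push item to the uniquely LE-valid recipient in $\{2, \ldots, n\}$, the bundle totals after round $k$ approximate $Q = (k+1, k-1, k-2, \ldots, 1, 0, \ldots, 0)$, so after round $n - 1$ we reach $Q \approx (n, n-2, \ldots, 1, 0)$, yielding minimum subsidy approaching $\sum_{i=2}^n i = n(n+1)/2 - 1$ as $\delta$ and the perturbations vanish.

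The main obstacle will lie in the lower bound: since agent 1 is always LE-feasible, the algorithm may deviate at any push step by placing the item on agent 1 instead of the uniquely valid recipient in $\{2, \ldots, n\}$. However, each such deviation increases $\sum_{i=2}^n (Q_1 - Q_i)$ by $nq$ instead of decreasing it by $q$, so any deviation only helps the adversary, who can then either terminate early or continue the construction adaptively. The delicate step will be choosing the perturbations consistently so that every push has a unique valid recipient in $\{2, \ldots, n\}$ and every force item has value strictly below 1 yet strictly above all current gaps, which requires each new perturbation to be sufficiently small relative to its predecessors.
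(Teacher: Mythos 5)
Your upper bound is correct and is essentially the paper's argument: the same sorted invariant on the bundle totals $Q_i$, the same per-agent bound $Q_1-Q_i\le i$ (the paper uses threshold $1$ instead of $q$ and the observation that at most one adjacent gap exceeds $1$), and the same reduction of the heaviest-path weight to $Q_1-Q_i$. One caveat, which applies to the paper as well: the "telescoping" bound $\ell(i)\le Q_1-Q_i$ is immediate only for paths monotone in $Q$; for a general simple path you should first note that a descent followed by an ascent $a\to b\to c$ with $Q_b<Q_a$ never beats the direct arc $a\to c$ (the difference is $-(w_a-w_b)(Q_c-Q_b)\le 0$ because $w_a\ge w_b$), so the heaviest path may be assumed monotone. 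That is one missing sentence, not a real gap.

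The lower bound is where you diverge, and your target configuration $Q\approx(n,\,n-2,\,n-3,\ldots,1,0)$ is the right one: it gives $\sum_{i\ge 2}(Q_1-Q_i)=\sum_{i=2}^{n}i=n(n+1)/2-1$. But two steps of your construction have genuine gaps. First, the perturbation scheme is stated backwards: to make each push have a unique feasible recipient in $\{2,\ldots,n\}$ and each force item exceed every current gap, you need each new item's perturbation to \emph{dominate the sum of all earlier ones} (a $2^{j}\delta$-type scheme), so that agent $j\ge 2$ can absorb a new item only when $|X_{j-1}|\ge|X_j|+2$; with perturbations that shrink relative to their predecessors a force item of value below $1$ can be absorbed, and without the count-based reformulation the raw gap $Q_1-Q_2\approx 2$ lets agent $2$ absorb \emph{every} item of value at most $1$, so "exactly one agent has a large enough gap" fails for every push aimed at agents $3,\ldots,k$. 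Second, the deviation argument is incomplete: assigning a push of value $q$ to agent $1$ raises $\sum_{i\ge2}(Q_1-Q_i)$ by $(n-1)q$, not $nq$, and an early deviation (already the single push of round $2$ when $n\ge 5$) still leaves this sum below $n(n+1)/2-1$, so "terminate early" does not suffice and "continue adaptively" is precisely the part requiring proof. Both issues disappear if you make all items interchangeable (value $\approx 1-\epsilon$ with geometrically growing perturbations): then absorption by agent $j\ge2$ forces $|X_j|\le|X_1|-(j-1)$ at all times, so agents $2,\ldots,n$ hold at most $(n-1)(n-2)/2$ items while $|X_1|\le n-1$, agent $1$ must reach $n$ items within $n(n-1)/2+1$ arrivals, and stopping at that instant gives $\sum_{i\ge2}(Q_1-Q_i)\ge\sum_{i=2}^{n}i-O(n^2\epsilon)$ regardless of the algorithm's choices. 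I note that stopping at this intermediate moment is essential: the paper's own lower bound, read literally at the end of all $m=n(n+1)/2$ items, permits the final staircase $(n,n-1,\ldots,1)$ and only certifies $n(n-1)/2$, so your instinct to aim for $(n,n-2,\ldots,1,0)$ rather than the full staircase is the correct one.
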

\begin{proof}
    \begin{figure}[t]
    \centering
    \begin{tikzpicture}
    [x=2cm,y=1.75cm]
      \def\barh{0.3}    
      \def\gap{0.05}      
      \def\xincr{0.03}    
      \def\xgap{0.05}    
      \def\basewidth{0.75}  
      \def\red!50shorten{0.05}

      \newcommand{\BarWidth}[1]{\basewidth + (#1 - 1)*\xincr}

      \newcommand{\DrawBar}[3]{%
        \pgfmathsetmacro{\w}{\basewidth + (#3 - 1)*\xincr}%
        \pgfmathsetmacro{\cx}{#1 + 0.5*\w}%
        \pgfmathsetmacro{\cy}{#2 + 0.5*\barh}%
        \pgfmathsetmacro{\shade}{10}%
        \draw[rounded corners=1pt, fill=blue!\shade] (#1,#2) rectangle ++(\w,\barh);%
        \node at (\cx,\cy) {#3};%
        \pgfmathsetmacro{#1}{#1 + \w + \xgap}%
      }

      \pgfmathsetmacro{\yOne}{0}
      \pgfmathsetmacro{\yTwo}{\yOne - (\barh+\gap)}
      \pgfmathsetmacro{\yThree}{\yTwo - (\barh+\gap)}
      \pgfmathsetmacro{\yFour}{\yThree - (\barh+\gap)}
      \pgfmathsetmacro{\yFive}{\yFour - (\barh+\gap)}

      \pgfmathsetmacro{\x}{0}
      \DrawBar{\x}{\yOne}{1}
      \DrawBar{\x}{\yOne}{2}
      \DrawBar{\x}{\yOne}{4}
      \DrawBar{\x}{\yOne}{7}
      \DrawBar{\x}{\yOne}{11}
      \pgfmathsetmacro{\xSevenEnd}{\x}
      \pgfmathsetmacro{\xSevenEndShort}{\xSevenEnd - \red!50shorten}

      \pgfmathsetmacro{\x}{0}
      \DrawBar{\x}{\yTwo}{3}
      \DrawBar{\x}{\yTwo}{5}
      \DrawBar{\x}{\yTwo}{8}
      \pgfmathsetmacro{\xend}{\x}
      \draw[rounded corners=1pt, fill=red!20] (\xend,\yTwo) rectangle ++(\xSevenEndShort-\xend,\barh);

      \pgfmathsetmacro{\x}{0}
      \DrawBar{\x}{\yThree}{6}
      \DrawBar{\x}{\yThree}{9}
      \pgfmathsetmacro{\xend}{\x}
      \draw[rounded corners=1pt, fill=red!20] (\xend,\yThree) rectangle ++(\xSevenEndShort-\xend,\barh);

      \pgfmathsetmacro{\x}{0}
      \DrawBar{\x}{\yFour}{10}
      \pgfmathsetmacro{\xend}{\x}
      \draw[rounded corners=1pt, fill=red!20] (\xend,\yFour) rectangle ++(\xSevenEndShort-\xend,\barh);

      \draw[rounded corners=1pt, fill=red!20] (0,\yFive) rectangle ++(\xSevenEndShort,\barh);

      \pgfmathsetmacro{\xlab}{0.2}

      \pgfmathsetmacro{\yc}{\yOne + 0.5*\barh}
      \node[anchor=mid east,inner sep=0,outer sep=0] at (-\xlab,\yc) {Agent 1};

      \pgfmathsetmacro{\yc}{\yTwo + 0.5*\barh}
      \node[anchor=mid east,inner sep=0,outer sep=0] at (-\xlab,\yc) {Agent 2};

      \pgfmathsetmacro{\yc}{\yThree + 0.5*\barh}
      \node[anchor=mid east,inner sep=0,outer sep=0] at (-\xlab,\yc) {Agent 3};

      \pgfmathsetmacro{\yc}{\yFour + 0.5*\barh}
      \node[anchor=mid east,inner sep=0,outer sep=0] at (-\xlab,\yc) {Agent 4};

      \pgfmathsetmacro{\yc}{\yFive + 0.5*\barh}
      \node[anchor=mid east,inner sep=0,outer sep=0] at (-\xlab,\yc) {Agent 5};

    \end{tikzpicture}%
    \caption{Illustration of the lower bound construction for rank-one valuations in the proof of Theorem~\ref{thm:single_param}. Each blue bar represents an item, with its length proportional to the item’s base value and the label inside indicating its arrival order. The allocation shown is the subsidy-minimizing one that maintains envy-freeability at every step. The red bar lengths are proportional to each agent's envy toward agent $1$.}
    \label{fig:agents-row}
\end{figure}

Throughout the proof, we denote $q(S) = \sum_{g \in S} q_g$.

\emph{Upper bound.} 
Let $X$ denote the current allocation and $g$ be the arriving item. The algorithm maintains the invariant $q(X_1) \geq q(X_2) \geq \ldots \geq q(X_n)$, which implies local efficiency by the rearrangement inequality~\cite{wikipedia:rearrangement_inequality}.
The allocation rule for each arriving item $g$ is defined as follows. If there exists an agent $i \in [n-1]$ such that $q(X_i) \geq q(X_{i+1}) + 1$, then the algorithm allocates $g$ to agent $i+1$. If no such agent exists, the algorithm allocates $g$ to agent $1$.

We first observe that, by construction, at any point in the execution of the algorithm there can be at most one index $i$ such that $q(X_i) \geq q(X_{i+1}) + 1$. Moreover, for agent $i$, $q(X_i) \leq q(X_{i+1}) + 2$.

By the characterization of the minimum subsidy (Lemma~\ref{lem:additive-paths}), the subsidy required to make the allocation envy-free is equal to the weight of the heaviest path starting at $i$ in the envy graph. The weight of a path $i_1 \to i_2 \to \cdots \to i_k$ is given by $\sum_{r=1}^{k-1} w_r \cdot (q(X_{r+1}) - q(X_r))$.

We now argue that, without loss of generality, the heaviest path can be assumed to have the form $i \to i-1 \to \cdots \to 1$. Indeed, consider any arc $i \to j$ with $i > j$. This edge can be replaced by the path $i \to i-1 \to \cdots \to j+1 \to j$, whose weight is $\sum_{r=j+1}^i w_r \cdot (q(X_{r-1}) - q(X_r)) \geq \sum_{r=j+1}^i w_i \cdot (q(X_{r-1}) - q(X_r)) = w_i \cdot (q(X_i) - q(X_j))$. Thus, this replacement cannot decrease the path weight. Furthermore, since the envy graph contains no positive-weight cycle by Lemma~\ref{lem:additive-paths}, the heaviest path is simple. Combined with the invariant $q(X_1) \geq q(X_2) \geq \cdots \geq q(X_n)$, this implies that the heaviest path indeed has the structure $i \to i-1 \to \cdots \to 1$.

For such a path, the subsidy required for agent $i$ is at most
 $\sum_{r=2}^i w_r \cdot (q(X_{r-1}) - q(X_r)) \leq \sum_{r=2}^i (q(X_{r-1}) - q(X_r)) = q(X_1) - q(X_i) \leq i$ where the final inequality follows from the observation above.
Thus each agent $i$ with $2 \leq i \leq n$ receives a subsidy of at most $i$, and the total subsidy is bounded by $2 + 3 + \ldots + n = n(n+1)/2 - 1$.

\emph{Lower bound.} Fix any online algorithm for the class of rank-one valuations.
Consider the following rank-one instance with $n$ agents and $m = n(n+1)/2$ goods. The agents have parameters $w_i = 1 - i \cdot \epsilon$ for $i \in [n]$, and the goods have parameters $q_j = 1 - \epsilon + 2^{j - n} \cdot \epsilon$ for $j \in [m]$, where $0 < \epsilon < 1/n$.

We first claim that for any $i \in [n-1]$, throughout the execution of the algorithm, either both agents $i$ and $i+1$ are assigned no goods, or agent $i$ is assigned strictly more goods than agent $i+1$.
It is easy to see that agent $i+1$ can never receive strictly more goods than agent $i$, because $w_{i+1} < w_i$, and swapping their bundles in such a situation strictly increases welfare, contradicting local efficiency. Suppose that at some point agent $i+1$ is assigned an item $j$, and at that moment both agents $i$ and $i+1$ have exactly $k$ goods. Let $X_i$ and $X_{i+1}$ be their respective bundles at that time. Then we have
$\sum_{j \in X_i} q_j \leq k \cdot (1 - \epsilon) + \sum_{j' < j} 2^{j'-n} \cdot \epsilon < k \cdot (1 - \epsilon) + 2^{j-n} \cdot \epsilon$,
and
$\sum_{j \in X_{i+1}} q_j \geq k \cdot (1 - \epsilon) + 2^{j-n} \cdot \epsilon$.
Hence, swapping the bundles of agents $i$ and $i+1$ increases the social welfare, violating local efficiency. This proves the claim.

We show that, following from the claim, agent $1$ is assigned at least $n$ goods. Suppose for a contradiction that agent $1$ receives at most $n - 1$ goods. Then, by induction, agent $i$ receives at most $\max(n - i,0)$ goods for every $i$. This implies that the algorithm allocates at most $n(n - 1)/2 < m$ goods, a contradiction. Therefore, agent $1$ receives at least $n$ goods.

Furthermore, we can inductively argue that each agent $i$ receives at most $n - i + 1$ goods. Thus agent $i$’s envy towards agent $1$ is at least $w_i \cdot (i - 1 - n \cdot \epsilon) \geq (1 - n \cdot \epsilon) \cdot (i - 1 - n \cdot \epsilon) \geq i - 2n \cdot \epsilon$. Consequently, the total subsidy is at least $2 + \ldots + n - 2n^2 \cdot \epsilon$.
\end{proof}

\subsubsection{Restricted Additive Valuations}
An instance with valuations $v_1, \ldots, v_n$ is \emph{restricted additive} if there are some $u_1, \ldots, u_m \geq 0$ such that for all $i \in [n]$, $v_i$ is additive and $v_i(\{g\}) \in \{0, u_g\}$ for $g \in [m]$.
Restricted additive valuations capture scenarios where all agents agree on the relative value of items (represented by $u_g$), but each agent is only interested in a subset of them; see \cite[e.g.][]{AkramiRS22}. An important special case of restricted additive valuations is the binary additive case, where $u_g = 1$ for all $g$; see \cite[e.g.][]{HalpernPPS20}.

\begin{theorem}\label{thm:restrictedadditive}
There exists an online algorithm that achieves envy-freeness with a total subsidy of at most $n(n-1)/2$ for all instances with restricted additive valuations.  Moreover, for any online algorithm, there exists an instance with binary additive valuations for which achieving envy-freeness requires a total subsidy of at least $n(n-1)/2$.
\end{theorem}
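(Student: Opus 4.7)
For the upper bound, my plan is to analyze the following algorithm: when item $g$ arrives, let $N_g = \{i : v_i(\{g\}) > 0\}$; if $N_g \neq \emptyset$, assign $g$ to some $i^\ast \in N_g$ minimizing the current value $v_{i^\ast}(X_{i^\ast})$, breaking ties arbitrarily, and otherwise assign $g$ to an arbitrary agent. Since every wanted item is given to a wanter, this algorithm is welfare-maximizing at every step and therefore maintains local efficiency throughout.

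The central claim for the subsidy bound will be that every directed edge of the envy graph has weight at most $1$. To prove this, I would fix any pair of agents $i, j$ and let $S_j = \{g : v_j(\{g\}) > 0\}$. If $X_i \cap S_j = \emptyset$, then the envy of $j$ toward $i$ is nonpositive; otherwise, let $g^\ast$ be the last item of $X_i \cap S_j$ to arrive. Welfare-maximization ensures every item of $X_i$ lies in $S_i$, so items of $X_i \cap S_j$ lie in $S_i \cap S_j$ and contribute identical values $u_g \leq 1$ to both $i$ and $j$. At the moment $g^\ast$ is assigned, the selection rule gives $v_i(X_i^-) \leq v_j(X_j^-)$ (since $i, j \in N_{g^\ast}$), and $v_i(X_i^-) \geq v_j(X_i) - u_{g^\ast}$ because $X_i^-$ contains every item of $X_i \cap S_j$ preceding $g^\ast$. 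Combining, $v_j(X_j) \geq v_j(X_j^-) \geq v_i(X_i^-) \geq v_j(X_i) - u_{g^\ast}$, so the envy is at most $u_{g^\ast} \leq 1$. Since local efficiency rules out positive-weight cycles, the positive-edge envy subgraph is a DAG; fixing a topological ordering $v_1, \ldots, v_n$ of its vertices, the heaviest path from $v_i$ visits at most $n - i$ later vertices and uses only edges of weight at most $1$, giving $\ell(v_i) \leq n - i$. Summing and invoking Lemma~\ref{lem:additive-paths} bounds the total subsidy by $\sum_{i=1}^n (n-i) = n(n-1)/2$.

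For the lower bound, my plan is, given any online algorithm $\mathcal{A}$, to construct a binary additive instance adaptively so as to force its envy graph to realize the extremal structure described above. A natural approach begins with items wanted by all agents, presented until $\mathcal{A}$ isolates some agent $\pi_n$ receiving no items, then labels the covered agents $\pi_1, \ldots, \pi_{n-1}$ according to the order they were served; in a subsequent phase, items wanted by carefully chosen pairs of covered agents are presented so that, regardless of $\mathcal{A}$'s choices, an envy chain $\pi_n \to \pi_{n-1} \to \cdots \to \pi_1$ of total weight $n-1$ emerges. For $n = 3$, two all-wanted items followed by a single item wanted by $\{\pi_1, \pi_2\}$ already suffices: irrespective of which covered agent receives the final item, the envy graph admits a path of length two from $\pi_3$, giving $\ell(\pi_3) = 2$, $\ell(\pi_2) + \ell(\pi_1) = 1$, and total subsidy $3$. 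The main obstacle in extending to general $n$ is designing the second phase so that the envy chain can always be extended by an additional weight-one edge irrespective of $\mathcal{A}$'s decisions; I expect this to require an inductive adversary that relabels agents in response to $\mathcal{A}$'s choices, eventually certifying subsidy at least $\sum_{k=0}^{n-1} k = n(n-1)/2$ via Lemma~\ref{lem:additive-paths}.
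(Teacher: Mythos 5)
Your upper bound is essentially the paper's argument: the same greedy rule (give each item to a positive-valuer of minimum current bundle value), the same ``last wanted item'' comparison showing each envy edge has weight at most $u_{g^\ast}\le 1$, and the same conclusion $\sum_i \ell(i)\le n(n-1)/2$ via Lemma~\ref{lem:additive-paths}. The paper orders agents by final bundle value and notes that envy toward lower-valued agents is nonpositive, whereas you take a topological order of the positive-envy DAG; these are equivalent in effect. One point to tighten (present in both write-ups, but worth flagging in yours since you state it as the justification): the heaviest path from $v_i$ need not live in the positive-edge subgraph --- it may use zero- or negative-weight edges to jump backward in your topological order and then collect further positive edges, so ``visits at most $n-i$ later vertices'' is not immediate. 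The bound survives: either use the paper's extra fact that $v_a(X_b)\le v_b(X_b)$ for all $a,b$ (so every edge weight is at most the difference of bundle values, and path weights telescope against the cap of $1$), or argue directly that along a heaviest path the quantities $\ell(\cdot)$ decrease by at most $1$ per step and end at $0$, which forces $\ell$ of the $k$-th largest vertex to be at most $n-k$.

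The lower bound, however, is a genuine gap. You give a complete argument only for $n=3$ and explicitly defer ``the main obstacle'' --- forcing the length-$(n-1)$ envy chain against an arbitrary adaptive algorithm --- to an unspecified inductive adversary. That obstacle is precisely the content of the proof, and your phase-one description already breaks for general $n$: an algorithm facing all-wanted items may concentrate them on few agents and never ``cover'' $n-1$ of them, so the phase as described need not terminate (the paper caps each phase at $n^3$ items and shows that hoarding already forces $\Omega(n^2)$ subsidy); and in your phase two, the algorithm may hand a pair-restricted item to an agent outside the pair, which must be shown to violate local efficiency (possibly only after relabeling bundles along the whole elimination chain, as in the paper's Case~1). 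The paper's construction runs $n-1$ elimination phases, each ending by designating an agent who received nothing in that phase, and argues three termination cases separately; some version of all three is needed for the theorem as stated, so the $n=3$ example plus a plan does not yet constitute a proof.
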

\begin{proof}
\newcommand{\circled}[1]{%
  \makebox[\linewidth][c]{%
    \tikz[baseline=(char.base)]{%
      \node[draw,circle,inner sep=1pt,
            text height=1.6ex,text depth=.4ex,
            outer sep=0pt] (char) {#1};}}}

\begin{table}[t]
\centering
\label{tab:agents-phases}
\begin{tabular}{|l||*{6}{P{0.035\linewidth}|}|P{0.035\linewidth}||*{2}{P{0.035\linewidth}|}|P{0.035\linewidth}|}
\hhline{|-||------| |-| |--| |-|}
\multicolumn{1}{|c||}{\textbf{Phase}}
& \multicolumn{6}{c||}{\textbf{P1}} & \textbf{P2} & \multicolumn{2}{c||}{\textbf{P3}} & \textbf{P4} \\
\hhline{|-||------| |-| |--| |-|}
\multicolumn{1}{|c||}{{Item}}
 & 1 & 2 & 3 & 4 & 5 & 6 & 7 & 8 & 9 & 10 \\
\hhline{|-||------| |-| |--| |-|}
Agent 1 & \circled{\cellcolor{green!20}1} & \circled{\cellcolor{green!20}1} & \cellcolor{green!20}1 & \cellcolor{green!20}1 & {\cellcolor{green!20}1} & \cellcolor{green!20}1 & {\cellcolor{green!20}1} & \circled{\cellcolor{green!20}1} & \cellcolor{green!20}1 & \circled{\cellcolor{green!20}1} \\
\hhline{|-||------| |-| |--| |-|}
Agent 2 & \cellcolor{green!20}1 & {\cellcolor{green!20}1} & \circled{\cellcolor{green!20}1} & \circled{\cellcolor{green!20}1} & \cellcolor{green!20}1 & {\cellcolor{green!20}1} & \cellcolor{green!20}1 & {\cellcolor{green!20}1} & \circled{\cellcolor{green!20}1} & \cellcolor{green!20}1 \\
\hhline{|-||------| |-| |--| |-|}
Agent 3 & \cellcolor{green!20}1 & \cellcolor{green!20}1 & {\cellcolor{green!20}1} & \cellcolor{green!20}1 & \circled{\cellcolor{green!20}1} & \cellcolor{green!20}1 & \circled{\cellcolor{green!20}1} & \cellcolor{green!20}1 & \cellcolor{green!20}1 & \cellcolor{red!20}0 \\
\hhline{|-||------| |-| |--| |-|}
Agent 4 & \cellcolor{green!20}1 & \cellcolor{green!20}1 & \cellcolor{green!20}1 & {\cellcolor{green!20}1} & \cellcolor{green!20}1 & \circled{\cellcolor{green!20}1} & \cellcolor{green!20}1 & \cellcolor{red!20}0 & \cellcolor{red!20}0 & \cellcolor{red!20}0 \\
\hhline{|-||------| |-| |--| |-|}
Agent 5 & \cellcolor{green!20}1 & \cellcolor{green!20}1 & \cellcolor{green!20}1 & \cellcolor{green!20}1 & \cellcolor{green!20}1 & \cellcolor{green!20}1 & \cellcolor{red!20}0 & \cellcolor{red!20}0 & \cellcolor{red!20}0 & \cellcolor{red!20}0 \\
\hhline{|-||------| |-| |--| |-|}
\end{tabular}
\caption{Illustration of the lower bound construction for binary additive valuations in the proof of Theorem~\ref{thm:restrictedadditive}. Red entries represent agent–item pairs with value 0, green entries represent pairs with value 1, and circled entries represent the allocation made by the online algorithm.}
\end{table}

\emph{Upper bound.}
Consider the following greedy algorithm. Let $(X_1, \ldots, X_n)$ denote the current allocation and $g$ be the incoming item. The algorithm assigns $g$ to the agent whose current bundle has the smallest total value among agents who value $g$ positively, that is, to agent $i \in \argmin_{a : v_a({g}) > 0} v_a(X_a)$. Because each item is allocated to an agent who values it positively and no other agent values it higher, the resulting allocation maximizes  welfare.

Let $X = (X_1, \ldots, X_n)$ be the final allocation. Without loss of generality, we can relabel the agents so that their final bundle values are in nonincreasing order: $v_1(X_1) \geq v_2(X_2) \geq \cdots \geq v_n(X_n)$.

We first show that for any pair of agents $i < j$, agent $i$ does not envy agent $j$. Since agent $j$ only receives items that they value positively, we have $v_j(X_j) \geq v_i(X_j)$. Together with our ordering assumption $v_i(X_i) \geq v_j(X_j)$, it follows that  $i$ does not envy $j$.

Next, consider any pair $j < i$. We show that agent $i$ envies agent $j$ by at most 1. Let $g$ be the last item assigned to $j$ that $i$ values positively, and let $Y$ denote the partial allocation just before $g$ arrives. At that time, the algorithm chose to give $g$ to $j$ rather than to $i$, so by definition of the greedy rule, $v_i(Y_i) \geq v_j(Y_j)$. Because agent $j$ values all items in $X_j$ positively, we also have $v_j(Y_j) \geq v_i(Y_j)$. Combining these inequalities gives $v_i(Y_i) \geq v_i(Y_j)$. When $g$ is added to $j$’s bundle, agent $i$’s value for $g$ increases $v_i(Y_j)$ by $v_i(\{g\}) \leq 1$, hence $v_i(X_i) \geq v_i(Y_i) \geq v_i(Y_j) + v_i(\{g\}) - 1 = v_i(X_j) - 1$. Therefore, agent $i$’s envy toward agent $j$ is at most 1.

By the characterization of minimum subsidies (Lemma~\ref{lem:additive-paths}), the subsidy required for agent $i$ is at most $i - 1$. Summing over all agents gives a total subsidy of $1 + 2 + \cdots + (n - 1) = n(n - 1)/2$.

\emph{Lower bound.} Fix an online algorithm for the class of restricted additive valuations. We adversarially construct an instance with binary additive valuations where this algorithm either requires  at least $\Omega(n^2)$ subsidy or violates envy-freeness. The adversary constructs the instance adaptively, based on the algorithm’s decisions, over a sequence of $n-1$ phases.

Throughout the execution, the adversary maintains a set of eliminated agents and eliminates a new agent at the end of each phase. The agent eliminated at the end of phase $p$ is denoted by $e_p$.

At the beginning of phase $p$, for $1 \leq p \leq n-1$, define
$C_p = \arg\min_{ i \in [n] \setminus \{e_1, \ldots, e_{p-1}\}} v_i(X_i)$, i.e., $C_p$ is the set of non-eliminated agents with minimum value. We let $C_p$ be the set of candidates for elimination. The adversary begins adding identical items, each of value $1$ for the non-eliminated agents in $[n] \setminus \{e_1, \ldots, e_{p-1}\}$ and value $0$ for the eliminated agents in $\{e_1, \ldots, e_{p-1}\}$. The adversary continues adding such items until one of the following occurs:
\begin{enumerate}[noitemsep,nosep]
    \item the algorithm assigns an item to an eliminated agent,
    \item $n^3$ items have been added during phase $p$, or
    \item all but exactly one candidate agent in $C_p$ have been assigned an item during phase $p$.
\end{enumerate}
At the end of phase $p$, the adversary selects a candidate agent $e_p \in C_p$ who received no item during phase $p$ to be eliminated.

\underline{Case 1.} The adversary terminates because an item is assigned to an eliminated agent. We show that in this case the algorithm violates local efficiency. Suppose the algorithm assigns an item to agent $e_q$ with $q < p$ during phase $p$. Let $X$ be the final allocation. Since every item except the last one was assigned to a non-eliminated agent at the time it arrived, the social welfare of $X$ is $\sw(X) = m-1$.

Pick a candidate $e_p$ that received no item in this phase. Consider an allocation $X'$ obtained from $X$ by reassigning $X'_{e_p} = X_{e_q}$ and $X'_{e_{p-j-1}} = X_{e_{p-j}}$ for $0 \leq j < p-q$. Note that agent $e_p$ values every item in $X_{e_q}$ at $1$, since $e_p$ was still active at the beginning of phase $p$. Similarly, for any $0 \leq j < p-q$, agent $e_{p-j-1}$ values all items in $X_{e_{p-j}}$ at $1$, because $e_{p-j}$ received no items during phase $p-j$ by the choice of elimination. Thus, $\sw(X') = m$, implying that $X$ is not locally efficient, and hence not envy-freeable.

\underline{Case 2.} The adversary terminates after $n^3$ items have arrived.
We show that the algorithm uses a subsidy of $\Omega(n^2)$.
By construction of valuations and the definition of $C_p$, at the beginning of phase $p$, each $i \in C_p$ weakly envies all other non-eliminated agents. Since $n^3$ items arrive, there  exists some non-eliminated agent $j \notin \{e_1, \ldots, e_{p-1}\}$ who receives at least $n^2$ items during phase $p$, and some candidate agent $i \in C_p$ who receives no item during that phase (otherwise we would be in case (3)).

Therefore, in the final allocation $X$, we have $v_i(X_i) + n^2 \leq v_i(X_j)$, implying that envy-freeness requires a subsidy of at least $n^2$.
    
\underline{Case 3.} The adversary proceeds to the next phase.
We show that at the end of phase $n-1$, the algorithm still requires a total subsidy of at least $\Omega(n^2)$.
By construction, for every $1 \leq p \leq n-2$, the agent $e_p$ eliminated in phase $p$ envies the agent $e_{p+1}$ eliminated in the next phase by at least $1$.
By the characterization of minimum subsidy (Lemma~\ref{lem:additive-paths}), this implies that the total subsidy is at least $1 + 2 + \ldots + (n-1) = n(n-1)/2$.
\end{proof}

\subsubsection{Identical Monotone Valuations}
An instance with valuations $v_1, \ldots, v_n$ is \emph{identical monotone} if, for all $i, j \in [n]$, $v_i(S) = v_j(S)$ for every bundle $S$. This class is quite general, imposing only monotonicity and normalization on the valuation, but it assumes that all agents have identical preferences; see \cite[e.g.][]{PlautR20}.

\begin{theorem}\label{thm:identical_monotone}
There exists an online algorithm that achieves envy-freeness with a total subsidy of at most $n-1$ for all instances with identical monotone valuations. Moreover, for any online algorithm, there exists an instance with identical monotone valuations for which achieving envy-freeness requires a total subsidy of at least $n-1$.
\end{theorem}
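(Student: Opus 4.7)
The plan is to prove the upper bound using a simple load-balancing algorithm and to obtain the matching lower bound from a minimal one-item instance. A crucial observation that simplifies the whole argument is that when the valuations are identical, every allocation $X$ automatically satisfies local efficiency: for any permutation $\pi$, $\sum_i v_{\pi(i)}(X_i) = \sum_i v(X_i)$, where $v$ denotes the common valuation. Hence every allocation is envy-freeable by Theorem~\ref{thm:ef-le}, and the only real task is bounding the subsidy.

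For the upper bound, I would run the algorithm that assigns each incoming item $g$ to an agent $i^* \in \argmin_{i \in [n]} v(X_i)$, breaking ties arbitrarily. The key structural claim is the invariant that at every step of the execution, $\max_i v(X_i) - \min_i v(X_i) \leq 1$. I would prove this by induction on the number of arrived items; the base case is trivial since all bundles are empty. For the inductive step, write $a = \min_i v(X_i)$ and $b = \max_i v(X_i)$ before $g$ arrives, and let $v' = v(X_{i^*} \cup \{g\}) \in [a, a+1]$ using monotonicity and the marginal-value bound of $1$. If $i^*$ is not the unique minimizer, the new minimum is still $a$ and the new maximum is at most $\max(b, a+1) \leq a+1$ by the inductive hypothesis $b \leq a+1$. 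If $i^*$ is the unique minimizer, the new minimum equals the previous second-smallest value $a' \geq a$, and comparing $v'$ with $b$ in two subcases ($v' \leq b$ and $v' > b$) shows that the gap again stays within $1$. I expect this case analysis to be the main technical step, though it is elementary.

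To convert the invariant into a subsidy bound, I would apply Lemma~\ref{lem:additive-paths}. With identical valuations, the weight of any arc $(i,j)$ in the envy graph is $v(X_j) - v(X_i)$, and the weight of any directed path $i \to i_1 \to \cdots \to i_k$ telescopes to $v(X_{i_k}) - v(X_i)$. Hence the heaviest path from $i$ has weight exactly $M - v(X_i)$, where $M = \max_j v(X_j)$. The invariant gives $M - v(X_i) \leq 1$ for every $i$, and the argmax agent contributes $0$, so the total subsidy is at most $n - 1$.

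For the matching lower bound, I would consider the single-item instance with $m = 1$ and $v(\{1\}) = 1$. Any online algorithm assigns the item to some agent, leaving the remaining $n - 1$ agents with an empty bundle that they value at $0$ but envy the recipient by $1$. Applying Lemma~\ref{lem:additive-paths} to this allocation, the heaviest envy-graph path from each of the $n-1$ non-recipients is the single arc of weight $1$ into the recipient, so the minimum total subsidy is exactly $n-1$, matching the upper bound and completing the proof.
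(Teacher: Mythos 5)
Your proposal is correct and follows essentially the same route as the paper: the same greedy rule (give each item to a currently-poorest agent), the same observation that identical valuations make envy-graph path weights telescope so that the value spread of at most $1$ bounds every heaviest path, the same appeal to Lemma~\ref{lem:additive-paths}, and the same single-item lower bound. The only cosmetic difference is that you establish the spread bound $\max_i v(X_i) - \min_i v(X_i) \leq 1$ by direct induction, while the paper phrases the same fact as an EF1 guarantee; both arguments are valid and interchangeable here.
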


\begin{proof}
    \textit{Upper Bound.} The upper bound is obtained by allocating each incoming good to the agent with the lowest value (before the item arrived). This algorithm ensures that every pair of agents $i,j$ is envy free up to the removal of some good (i.e., either $i$ does not envy $j$ or there is an item in $j$'s bundle whose removal ensures this condition). To see this, let $X$ be the current allocation and suppose for a contradiction that there exist agents $i, j \in [n]$ such that $v(X_j \setminus \{g*\}) > v(X_i)$, where $g^* = \argmin_g v(X_j \setminus \{g\})$. Then, when the last good was allocated to agent $j$, agent $i$ had a lower value than agent $j$, contradicting the algorithm's decision to allocate the good to $j$\footnote{This is a classic algorithm for obtaining \textit{EF1} allocations, a popular fairness notion.}. Therefore, we have that the weight of every edge in the envy graph is upper bounded by $1$. Further, the envy graph can be represented by an ordering of the agents $i_1, \ldots, i_n$ such that for any $\ell < r$, $i_\ell$ has a positive envy towards $i_r$, and $i_r$ has an equal negative envy towards $i_\ell$. Now, consider the heaviest path in the resulting envy graph starting (without loss of generality) at agent $i_1$ and ending (without loss of generality) at agent $i_n$. We have that $v(X_{i_n}) = v(X_{i_{n-1}}) + w_{G_X}(i_{n-1}, i_n) = \cdots = v(X_1) + \sum_{j \in [n-1]} w_{G_X}(i_j, i_{j+1})$. Since the allocation is EF1, $v(X_{i_n}) \leq v(X_{i_1}) + 1$, therefore $\sum_{j \in [n-1]}w_{G_X}(j-1,j) \leq 1$. This implies that the weight of the heaviest path is at most $1$ and therefore by Lemma \ref{lem:additive-paths}, the total subsidy is at most $(n-1)$.

    \textit{Lower Bound.} If there is a single good that arrives with value $1$, it can only be allocated to one agent. Every other agent receives a subsidy of $1$, giving a total subsidy of $(n-1)$.
\end{proof}

\section{Discussion} \label{sec:discussion}

We conclude with a discussion of broader implications and open questions arising from our results.

\paragraph{Comparison with Envy Minimization.} A closely related line of work in online fair division focuses on the envy minimization problem without the local efficiency requirement, where the goal is to allocate items so as to minimize the maximum envy between any pair of agents (see \cite[e.g.][]{AleksandrovAGW15,HalpernPVX25}). This problem has been studied extensively, and seems to have connections to subsidy minimization ({\em i.e.,} Lemma \ref{lem:additive-paths}). For instance, \citet*{BenadeKPPZ24} obtain an envy bound against an adaptive adversary of $\widetilde{O}(\sqrt{m/n})$ for additive valuations over sufficiently many items, and show that this bound is essentially optimal. Lemma \ref{lem:additive-paths} seems to suggest that a bound on envy should imply a bound on subsidy, however, this holds only when the algorithm also maintains local efficiency at every step. This latter requirement poses the main challenge in online subsidy minimization.

\paragraph{Online Fair Division of Chores.} We remark that most of our results extend to the fair division of chores via similar analyses and without much additional effort. In fair chore division, the valuation functions are \textit{non-increasing}, and local efficiency continues to characterize envy-freeability. The subsidy, however, is measured differently: agents assigned extra chores receive greater payments. Consequently, while some of our subsidy bounds change with respect to their dependence on $n$, the boundaries we established for our two central questions remain unchanged.

\paragraph{Open Problems.} Our work creates some compelling directions for future research. Can the results we obtain for Questions 1 and 2 be extended to any other interesting valuation classes? Does there exist a meaningful characterization of the valuation classes below the boundary defined by either of the two questions? Another problem we leave open is determining the tight bound on the minimum subsidy for $k$-valued instances.

\bibliographystyle{ACM-Reference-Format} 
\bibliography{refs}

\end{document}